\newcommand{\defterm}[1]{\textbf{#1}}
\newcommand{\ie}{\emph{i.e.}}
\newcommand{\eg}{\emph{e.g.}}
\newcommand{\syn}[1]{\mathsf{#1}}
\newcommand{\var}[1]{\mathit{#1}}
\newcommand{\s}[1]{\mathit{#1}}
\newcommand{\parto}{\rightharpoonup}
\newcommand{\dom}{\var{dom}}
\newcommand{\set}[1]{\left\{#1\right\}}
\newcommand{\setbuild}[2]{\left\{ #1 : #2\right\}}
\newcommand{\Pow}[1]{{\mathcal{P}\left(#1\right)}}
\newcommand{\PowSm}[1]{{\mathcal{P}(#1)}}
\newcommand{\union}{\cup}
\newcommand{\Union}{\bigcup}
\newcommand{\vect}[1]{\langle #1\rangle}
\newcommand{\vecp}[1]{\vec{#1}\;'}
\newcommand{\To}{\mathrel{\Rightarrow}}
\newcommand{\wt}{\sqsubseteq}
\newcommand{\join}{\sqcup}
\newcommand{\bigjoin}{\bigsqcup}
\DeclareMathOperator{\lfp}{lfp}
\newcommand{\transition}{\delta}
\newcommand{\QStates}{Q}
\newcommand{\FStates}{F}
\newcommand{\StackAlpha}{\Gamma}
\newcommand{\stackchar}{\gamma}
\newcommand{\sembr}[1]{\ensuremath{[\![{#1}]\!]}}
\newcommand{\opor}{\mathrel{|}}
\newcommand{\Alphabet}{A}
\newcommand{\produces}{\mathrel{::=}}
\newcommand{\Lang}{\mathcal{L}} % An interpretation function
\newcommand{\vv}{v}
\newcommand{\lam}{\ensuremath{\var{lam}}}
\newcommand{\lamterm}{$\lambda$-term}
\newcommand{\lc}{$\lambda$-calculus}
\newcommand{\call}{\ensuremath{\var{call}}}
\newcommand{\lt}[2]{\lambda #1.#2}
\newcommand{\ttlp}{\mbox{\tt (}}
\newcommand{\ttrp}{\mbox{\tt )}}
\newcommand{\appform}[2]{\ttlp #1\; #2\ttrp}
\newcommand{\lamform}[2]{\ttlp \uplambda\;\ttlp#1\ttrp\;#2\ttrp}
\newcommand{\letiform}[3]{\ttlp {\tt let}\; \ttlp\ttlp#1\; #2\ttrp\ttrp\; #3\ttrp}
\newcommand{\fexpr}{f}
\newcommand{\expr}{e}
\newcommand{\aexpr}{\mbox{\sl {\ae}}}
\newcommand{\Eval}{{\mathcal{E}}}
\newcommand{\ArgEval}{{\mathcal{A}}}
\newcommand{\Inject}{{\mathcal{I}}}
\newcommand{\QState}{Q}
\newcommand{\qstate}{q}
\newcommand{\tf}{f}
\newcommand{\store}{\sigma}
\newcommand{\env}{\rho}
\newcommand{\clo}{\var{clo}}
\newcommand{\cont}{\kappa}
\newcommand{\alloc}{\mathit{alloc}}
\newcommand{\addr}{a}
\newcommand{\aTo}{\leadsto}
\newcommand{\aInject}{{\hat{\mathcal{I}}}}
\newcommand{\sa}[1]{\widehat{\mathit{#1}}}
\newcommand{\aEval}{{\hat{\mathcal{E}}}}
\newcommand{\aArgEval}{{\hat{\mathcal{A}}}}
\newcommand{\atf}{{\hat{f}}}
\newcommand{\astore}{{\hat{\sigma}}}
\newcommand{\aenv}{{\hat{\rho}}}
\newcommand{\aclo}{{\widehat{\var{clo}}}}
\newcommand{\acont}{{\hat{\kappa}}}
\newcommand{\aaddr}{{\hat{\addr}}}
\newcommand{\aalloc}{{\widehat{alloc}}}
\newcommand{\absmap}{\alpha}
\newcommand{\abs}[1]{|#1|}
\newcommand{\ControlStates}{Q}
\newcommand{\transfunction}{\delta}
\newcommand{\conf}{c}
\newcommand{\aconf}{{\hat c}}
\newcommand{\phrame}{\phi}
\newcommand{\aphrame}{\hat{\phi}}
\newcommand{\stackact}{g}
\DeclareMathOperator*{\PDTrans}{\longmapsto}
\newcommand{\fDSG}{\mathcal{DSG}}
\newcommand{\afPDA}{\widehat{\mathcal{PDA}}}
\newcommand{\afRPDS}{\widehat{\mathcal{RPDS}}}
\renewcommand{\Alphabet}{\Sigma}
\DeclareMathOperator*{\pdedge}{\rightarrowtail}
\newcommand{\triedge}[3]{#1 \mathrel{\pdedge^{#2}} #3}
\newcommand{\biedge}[2]{#1 \mathrel{\rightarrowtail} #2}
\newcommand{\DSStates}{S}
\newcommand{\DSEdges}{E}
\newcommand{\DSFrames}{\StackAlpha}
\newcommand{\dsframe}{\stackchar}
\newcommand{\dsstate}{s}
\newcommand{\mkDSG}{\mathcal{F}}
\newcommand{\ecg}{$\epsilon$-closure graph}
\DeclareMathOperator*{\RPDTrans}{{\longmapsto\!\!\!\!\!\!\!\longrightarrow}}
\newcommand{\fsprout}{\mathit{sprout}}
\newcommand{\faddpush}{\mathit{addPush}}
\newcommand{\faddpop}{\mathit{addPop}}
\newcommand{\faddempty}{\mathit{addEmpty}}
\newcommand{\eancestor}[1]{\overleftarrow{G}_\epsilon[#1]}
\newcommand{\edescendent}[1]{\overrightarrow{G}_\epsilon[#1]}
\newcommand{\system}{C}
\newcommand{\asystem}{{\hat{C}}}
\newcommand{\apconf}{{\hat{\pi}}}
\newcommand{\apstate}{{\hat{\psi}}}
\DeclareMathOperator*{\afTo}{\rightarrowtriangle}
\DeclareMathOperator*{\apTo}{\rightharpoondown}
\newcommand{\pdcfato}[4]{#1 \mathrel{\apTo^{#2}_{#3}} #4}
\newcommand{\mylongtitle}{Pushdown Control-Flow Analysis of Higher-Order Programs}
\newcommand{\myshorttitle}{PDCFA}
\newcommand{\mytitle}{\mylongtitle}
\newcommand{\mytitlebanner}{In submission}
\newcommand{\footertitle}{\myshorttitle}
\newtheorem{theorem}{Theorem}[section]
\newtheorem{lemma}{Lemma}[section]
\begin{document}

\conferenceinfo{WXYZ '05}{date, City.} 
\copyrightyear{2005} 
\copyrightdata{[to be supplied]} 

\titlebanner{\mytitlebanner}        % These are ignored unless
\preprintfooter{\footertitle}       % 'preprint' option specified.

\title{\mytitle}
%\subtitle{\mysubtitle}

\authorinfo{Christopher Earl \and Matthew Might}
           {University of Utah}
           {\{cwearl,might\}@cs.utah.edu}

\authorinfo{David Van Horn\thanks{Supported by the National Science
    Foundation under grant 0937060 to the Computing Research
    Association for the CIFellow Project.}}
           {Northeastern University}
           {dvanhorn@ccs.neu.edu}

\maketitle

\begin{abstract}
Context-free approaches to static analysis gain precision over
classical approaches by perfectly matching returns to call sites---a
property that eliminates spurious interprocedural paths.  Vardoulakis
and Shivers's recent formulation of CFA2 showed that it is possible
(if expensive) to apply context-free methods to higher-order languages
and gain the same boost in precision achieved over first-order
programs.

To this young body of work on context-free analysis of higher-order
programs, we contribute a 
pushdown control-flow analysis framework, which we derive as an abstract
interpretation of a CESK machine with an unbounded stack.
One instantiation of this framework marks the first polyvariant
pushdown analysis of higher-order programs; another marks the first
polynomial-time analysis.
In the end, we arrive at a framework for control-flow analysis that can
efficiently compute pushdown generalizations of classical control-flow
analyses.
\end{abstract}

% \category{CR-number}{subcategory}{third-level}

% \terms
% term1, term2

% \keywords
% keyword1, keyword2

%%%%%%%%%%%%%%%%%%%%%%%%%%%%%%%%%%%%%%%%%%%%%
%% \input{content}
\section{Introduction}

Static analysis is bound by theory to accept diminished precision as
the price of decidability.
The easiest way to guarantee decidability for a static analysis is to
restrict it to a finite state-space.
Not surprisingly, finite state-spaces have become a crutch.

Whenever an abstraction maps an infinite (concrete) state-space down
to the finite state-space of a static analysis, the
pigeon-hole principle forces merging.
Distinct execution paths and values  can and do 
become indistinguishable under abstraction,
\eg{}, 3 and 4 both abstract to the same value: $\mathbf{positive}$.

Our message is that finite abstraction goes too far: we can abstract
into an infinite state-space to improve precision, yet remain decidable.
% ,
% and retain much of the classical control-flow analysis framework.
% %
% (By retaining parts of the classical framework, we can apply many of the
% techniques and optimizations designed for the classical framework to our
% new analysis.)
%
%\emph{
Specifically, we can abstract the concrete semantics of a
  higher-order language into a pushdown automaton (PDA).  As an
  infinite-state system, a PDA-based abstraction preserves more information than a classical finite-state analysis.  
Yet, being
  less powerful than a Turing machine, properties important for
  computing control-flow analysis (e.g. emptiness, intersection with
  regular languages, reachability) remain decidable. 
%
 % Our approach
 %  to this abstraction is able to make use of many optimizations designed for
 %  classical control-flow analysis.%}

\subsection{The problem with merging}
A short example provides a sense of how the inevitable merging that
occurs under a finite abstraction harms precision.
Shivers's 0CFA~\cite{mattmight:Shivers:1991:CFA} produces
spurious data-flows and return-flows in the following example:
\begin{code}
 (let* ((id (lambda (x) x)) 
        (a  (id 3))
        (b  (id 4)))  
   a)\end{code}
 0CFA says that the flow set for the variable {\tt a} contains both 3 and
 4.
In fact, so does the flow set for the variable {\tt b}.
For return-flow,\footnote{``Return-flow'' analysis asks to which
  call sites a given return point may return. In the presence of tail
  calls, which break the balance between calls and returns,
  return-flow analysis differs from control-flow analysis.}  0CFA says
that the invocation of {\tt (id 4)} may return to the invocation of {\tt
  (id 3)} or {{\tt (id 4)}} and vice versa; that is, according to
Shivers's 0CFA, this program contains a loop.

To combat merging, control-flow analyses have focused on increasing
context-sensitivity~\cite{mattmight:Shivers:1991:CFA}.
Context-sensitivity tries to qualify any answer that comes back from a
CFA with a context in which that answer is valid.
That is, instead of answering
``$\lambda_{42}$ may flow to variable $\vv_{13}$,''
a context-sensitive analysis might answer
``$\lambda_{42}$ may flow to variable $\vv_{13}$
  \emph{when
  bound after calling $f$}.''
While context-sensitivity recovers some lost precision, it is no
silver bullet.
A finite-state analysis stores only a finite amount of program context
to discriminate data- and control-flows during analysis.
Yet, the pigeon-hole principle remains merciless: as long as the
state-space is finite, merging is inevitable for some programs.

Of all the forms of merging, the most pernicious is the merging of
return-flow information.
% \footnote{The return-flow question is, ``To
%   which call site might a return expression return?''}
%
As the example shows, a finite-state control-flow analysis will lose
track of where return-points return once the maximum bound on context
is exceeded.
\emph{Even in programs with no higher-order functions}, return-flow merging
will still happen during control-flow analysis.

% In a finite-state setting, return-flow merging (or any kind of merging
% for that matter) is ultimately unavoidable.
% %
% Fortunately, there is a class of infinite state systems for which many
% important properties (such as reachability) remain decidable:
% pushdown automata (PDA).
% %
% Pushdown automata bear infinite state-spaces because they are a
% finite-state automaton adjoined with an unbounded stack.

% We aim to skirt along the boundary of decidability in static analysis
% by abstracting into a pushdown automaton instead of a finite-state
% machine.
% %
% In particular, we plan to model an abstraction of the program stack as
% the PDA's stack.
% %
% Because the program stack determines return-flow information, using
% the PDA's stack to model the program stack produces a higher-order
% control-flow analysis with \emph{effectively infinite return-flow
%   context-sensitivity}.

\subsection{A first shot: CFA2}
Vardoulakis and Shivers's recent work on
CFA2~\cite{dvanhorn:DBLP:conf/esop/VardoulakisS10} constitutes an opening
salvo on ending the return-flow problem for the static analysis of
higher-order programs.
CFA2 employs an implicit pushdown system that models the stack of a
program.
CFA2 solves the return-flow problem for higher-order programs, but it
has drawbacks:
\begin{enumerate}

\item CFA2 allows only monovariant precision.

\item CFA2 has exponential complexity in the size of the program.

\item CFA2 is restricted to continuation-passing style.

\end{enumerate}

Our solution overcomes all three drawbacks: it allows polyvariant
precision, we can widen it to $O(n^6)$-time complexity in the
monovariant case and we can operate on direct-style programs.

\subsection{Our solution: Abstraction to pushdown systems}
To prevent return-flow merging during higher-order control-flow
analysis, we abstract into an explicit pushdown system
instead of a finite-state machine.
The program stack, which determines return-flow, will remain unbounded
and become the pushdown stack.
As a result, return-flow information will never be merged: in the
abstract semantics, a function returns only whence it was called.

% Our development will begin with a CESK
% machine~\cite{mattmight:Felleisen} for A-Normal Form
% \lc~\cite{mattmight:Flanagan}.
% %
% We abstract this machine, but leave its stack component unbounded.
% %
% The remainder of our work consists of finding a way to efficiently
% compute state-reachability for this abstracted CESK machine.
% %
% So, we develop a series of state-reachability algorithms; the theme to
% this series is that with each new algorithm, time-complexity falls
% while implementation complexity rises.
% %
% By the end of this series, we will have a polynomial-time pushdown
% control-flow analysis for higher-order programs.

\subsection{Overview}

This paper is organized as follows: first, we define a variant of the
CESK machine~\cite{mattmight:Felleisen:1987:CESK} for the A-Normal
Form \lc{}~\cite{mattmight:Flanagan:1993:ANF}.
In performing analysis, we wish to soundly approximate intensional
properties of this machine when it evaluates a given program.
To do so, we construct an abstract interpretation of the machine.
The abstracted CESK machine operates much like its concrete
counterpart and soundly approximates its behavior, but crucially, many
properties of the concrete machine that are undecidable become
decidable when considered against the abstracted machine (e.g.~``is a
given machine configuration reachable?'' becomes a decidable property).
 
The abstract counterpart to the CESK machine is constructed by
bounding the store component of the machine to some finite size.
However, the stack component (represented as a continuation) is left
unabstracted.  
(This is in contrast to finite-state abstractions that store-allocate
continuations~\cite{dvanhorn:VanHorn2010Abstracting}.)
Unlike most higher-order abstract interpreters, the unbounded stack
implies this machine has a potentially infinite set of reachable
machine configurations, and therefore enumerating them is not a
feasible approach to performing analysis.

Instead, we demonstrate how properties can be decided by transforming
the abstracted CESK machine into an equivalent pushdown automaton.
We then reduce higher-order control-flow analysis to deciding the
non-emptiness of a language derived from the PDA.
(This language happens to be the intersection of a regular language
and the context-free language described by the PDA.)
This approach---though concise, precise and decidable---is formidably
expensive, with complexity doubly exponential in the size of the
program.

We simplify the algorithm to merely exponential in the size of the
input program by reducing the control-flow problem to pushdown
reachability~\cite{mattmight:Bouajjani:1997:PDA-Reachability}.
Unfortunately, the abstracted CESK machine has an exponential number
of control states with respect to the size of the program.
Thus, pushdown reachability for higher-order programs appears to be
inherently exponential.

% Existing techniques for pushdown reachability are polynomial-time, but
% they are \emph{polynomial-time in the number of control states in the
%   PDA}, and unfortunately, the abstracted CESK machine has an
% exponential number of control states with respect to the size of the
% program.
% %
% Thus, existing techniques for pushdown reachability produce an
% algorithm that is exponential (even in the best case) in the size of
% the input program.

Noting that most control states in the abstracted CESK machine are
actually unreachable, we present a fixed-point algorithm for deciding
pushdown reachability that is polynomial-time in the number of
\emph{reachable} control states.
Since the pushdown systems produced by abstracted CESK machines are
sparse, such algorithms, though exponential in the worst case, are
reasonable options.
Yet, we can do better.

Next, we add an \ecg{} (a graph encoding no-stack-change reachability)
and a work-list to the fixed-point algorithm.
Together, these lower the cost of finding
the reachable states of a pushdown system from
$O(\abs{\StackAlpha}^4m^5)$ to $O(\abs{\StackAlpha}^2m^4)$, where
$\StackAlpha$ is the stack alphabet and $m$ is the number of reachable
control states.

To drop the complexity of our analysis to polynomial-time in the size
of the input program, we must resort to both widening and
monovariance.
Widening with a single-threaded store and using a monovariant
allocation strategy yields a pushdown control-flow analysis with
polynomial-time complexity, at $O(n^6)$, where $n$ is the size of the
program.

Finally, we briefly highlight applications of pushdown control-flow
analyses that are outside the reach of classical ones, discuss related
work, and conclude.
% %
% We point to dependence analysis and escape analysis, since both depend
% on the structure of the program stack.

% \subsection{Contributions}

% We contribute 
% \begin{enumerate}
% \item the first \emph{pushdown} higher-order control-flow analysis with 
% an explicitly-defined pushdown system;

% \item the first context-free control-flow analysis for direct-style;

% \item the first polyvariant, context-free control-flow analysis;

% \item the first polynomial-time, context-free control-flow analysis;

% \item the notion of Dyck state graphs; and

% \item two algorithms for finding the reachable control states of a
%   pushdown system whose complexities are a function of \emph{reachable}
%     control states rather than \emph{all} control states.
% \end{enumerate}
% \paragraph{Note} Terms like ``context-free'' and ``reachability'' have been used in
% conjunction with terms like ``0CFA'' before, but please note that we
% are not using CFL-reachability to compute a classical control-flow
% analysis like 0CFA or $k$-CFA; the analyses we are computing are
% fundamentally richer. (See Section~\ref{sec:related} for discussion.)

\section{Pushdown preliminaries}

In this work, we make use of both pushdown systems and pushdown
automata.
(A pushdown automaton is a specific kind of pushdown system.)
There are many (equivalent) definitions of these machines in the
literature, so we adapt our own definitions from \cite{mattmight:Sipser:2005:Theory}.
Even those familiar with pushdown theory may want to skim this
section to pick up our notation.

\subsection{Syntactic sugar}

When a triple $(x,\ell,x')$ is an edge in a labeled graph, a little
syntactic sugar aids presentation:
\begin{equation*}
  x \pdedge^\ell x'  \equiv
  (x,\ell,x')
  \text.
\end{equation*}
Similarly, when a pair $(x,x')$ is a graph edge:
\begin{equation*}
  \biedge{x}{x'} \equiv (x,x')
  \text.
\end{equation*}
We use both
string and vector notation for sequences:
\begin{equation*}
  a_1 a_2 \ldots a_n \equiv \vect{a_1,a_2,\ldots,a_n}
  \equiv
  \vec{a} 
  \text.
\end{equation*}

\subsection{Stack actions, stack change and stack manipulation}

Stacks are sequences over a stack alphabet $\StackAlpha$.
Pushdown systems do much stack manipulation, so to represent this
more concisely, we turn stack alphabets into ``stack-action'' sets;
each character represents a change to the stack: push, pop or no
change.

For each character $\stackchar$ in a stack alphabet $\StackAlpha$, the
\defterm{stack-action} set $\StackAlpha_\pm$ contains a push character
$\stackchar_{+}$ and a pop character $\stackchar_{-}$; it also
contains a no-stack-change indicator, $\epsilon$:
\begin{align*}
  \stackact \in \StackAlpha_\pm &\produces \epsilon && \text{[stack unchanged]} 
  \\
  &\;\;\opor\;\; \stackchar_{+}  \;\;\;\text{ for each } \stackchar \in \StackAlpha && \text{[pushed $\stackchar$]}
  \\
  &\;\;\opor\;\; \stackchar_{-}  \;\;\;\text{ for each } \stackchar \in \StackAlpha && \text{[popped $\stackchar$]}
  \text.
\end{align*}
In this paper, the symbol $\stackact$ represents some stack action.

% Given a string of stack actions, we can compact it into a minimal
% string describing net stack change.
% %
% We do so through the operator $\fnet{\cdot} : \StackAlpha_\pm^* \to
% \StackAlpha_\pm^*$, which cancels out opposing adjacent push-pop stack
% actions:
% \begin{align*}
%   \fnet{\vec{\stackact} \; \stackchar_+\stackchar_- \; \vecp{\stackact}} &= 
%   \fnet{\vec{\stackact} \; \vecp{\stackact}} 
%   &
%   \fnet{\vec{\stackact} \; \epsilon \; \vecp{\stackact}} &= 
%   \fnet{\vec{\stackact} \; \vecp{\stackact}} 
%   \text,
% \end{align*}
% so that:
% \begin{align*}
%   \fnet{\vec{\stackact}} &= \vec{\stackact}\text,
% \end{align*}
% if there are no cancellations to be made in the string $\vec{\stackchar}$.

% We can convert a net string back into a stack by stripping off the
% push symbols with the stackify operator, $\fstackify{\cdot} :
% \StackAlpha^{*}_\pm \parto \StackAlpha^*$:
% \begin{align*}
%   \fstackify{\stackchar_+ \stackchar_+' \ldots \stackchar_+^{(n)}} =
%   \vect{\stackchar^{(n)}, \ldots, \stackchar', \stackchar}
%  \text.
% \end{align*}
% Clearly, the stackify operator is defined only for strings containing
% all push actions. 

% The function $\fTop : \StackAlpha^* \parto \StackAlpha$ returns the
% top of a stack (if any):
% \begin{equation*}
%   \fTop\vect{\stackchar_1,\ldots,\stackchar_n} = \stackchar_1
%   \text.
% \end{equation*}

% (Might's 
% work
% on $\Delta$CFA~\cite{mattmight:Might:2006:DeltaCFA,mattmight:Might:2007:DeltaCFA,mattmight:Might:2007:Dissertation},
% studies the theory of these ``frame strings'' in detail.)

\subsection{Pushdown systems}
A \defterm{pushdown system} is a triple
$M = (\ControlStates,\StackAlpha,\transfunction)$ where:
\begin{enumerate}

\item $\ControlStates$ is a finite set of control states;

\item $\StackAlpha$ is a stack alphabet; and

\item $\transfunction \subseteq
  \ControlStates \times \StackAlpha_\pm \times \ControlStates$ is a transition relation.
\end{enumerate}
We use $\mathbb{PDS}$ to denote the class of all pushdown systems.

Unlike the more widely known pushdown automaton, a pushdown system
\emph{does not recognize a language}.
\\

\noindent
For the following definitions, let $M = (\ControlStates,\StackAlpha,\transfunction)$.
\begin{itemize}

\item The \defterm{configurations} of this
  machine---$\s{Configs}(M)$---are pairs over control states and
  stacks:
  \begin{equation*}
    \s{Configs}(M) = \ControlStates \times \StackAlpha^*
    \!\!\!
    \text.
  \end{equation*}

\item The labeled \defterm{transition relation} $(\PDTrans_{M}) \subseteq \s{Configs}(M) \times \StackAlpha_\pm \times 
  \s{Configs}(M)$ determines whether one configuration may transition to another while performing the given stack action:
  \begin{align*}
    (\qstate, \vec{\stackchar}) 
    \mathrel{\PDTrans_M^\epsilon}
    (\qstate',\vec{\stackchar}) 
    & \text{ iff }
    %(\qstate,\epsilon,\qstate')
    \qstate \pdedge^\epsilon \qstate'
    \in \transfunction
    && \text{[no change]}
\\
   (\qstate, \stackchar : \vec{\stackchar}) 
   \mathrel{\PDTrans_M^{\stackchar_{-}}}
   (\qstate',\vec{\stackchar})
    & \text{ iff }
    %(\qstate,\stackchar_{-},\qstate') 
    \qstate \pdedge^{\stackchar_{-}} \qstate'
    \in \transfunction
    && \text{[pop]}
\\
    (\qstate, \vec{\stackchar}) 
    \mathrel{\PDTrans_{M}^{\stackchar_{+}}}
    (\qstate',\stackchar : \vec{\stackchar}) 
    & \text{ iff }
    %(\qstate,\stackchar_{+},\qstate')
    \qstate \pdedge^{\stackchar_{+}} \qstate'
    \in \transfunction
    && \text{[push]}
    \text.
  \end{align*}

\item If unlabelled, the transition relation $(\PDTrans)$ checks whether \emph{any} stack action can enable the transition:
 \begin{align*}
    \conf \mathrel{\PDTrans_{M}} \conf' \text{ iff }
    \conf \mathrel{\PDTrans_{M}^{\stackact}} \conf' \text{ for some stack action } \stackact 
    \text.
  \end{align*}

\item

For a string of stack actions $\stackact_1 \ldots
  \stackact_n$:
\begin{equation*}
  \conf_0 \mathrel{\PDTrans_M^{\stackact_1\ldots\stackact_n}} \conf_n
  \text{ iff }
  \conf_0
  \mathrel{\PDTrans_M^{\stackact_1}} 
  \conf_1
  \mathrel{\PDTrans_M^{\stackact_2}} 
  \cdots
  \mathrel{\PDTrans_M^{\stackact_{n-1}}} 
  \conf_{n-1} 
  \mathrel{\PDTrans_M^{\stackact_n}} 
  \conf_n\text,
\end{equation*}
for some configurations $\conf_0,\ldots,\conf_n$.

\item

For the transitive closure:
\begin{equation*}
  \conf \mathrel{\PDTrans_M^{*} \conf'
  \text{ iff }
  \conf \mathrel{\PDTrans_M^{\vec{\stackact}}} \conf'
  \text{ for some action string }
  \vec{\stackact}} 
  \text.
\end{equation*}

\end{itemize}

% A \defterm{legal configuration-path} through system $M$ is a sequence $\vec{\conf} =
% \vect{\conf_1,\ldots,\conf_n}$ where adjacent configurations are a
% legal transition:
% \begin{equation*}
%   \conf_i \mathrel{ \PDTrans_M} \conf_{i + 1}
%   \text.
% \end{equation*}
%

\paragraph{Note}
Some texts define the transition relation $\transfunction$ so that 
$\transfunction \subseteq \ControlStates \times \StackAlpha \times \ControlStates
\times \StackAlpha^*$.
In these texts, $(\qstate,\stackchar,\qstate',\vec{\stackchar}) \in
\transfunction$ means, ``if in control state $\qstate$ while the
character $\stackchar$ is on top, pop the stack, transition to
control state $\qstate'$ and push $\vec{\stackchar}$.''
Clearly, we can convert between these two representations by
introducing extra control states to our representation when it needs
to push multiple characters.

\subsection{Rooted pushdown systems}

A \defterm{rooted pushdown system} is a quadruple 
$(\ControlStates,\StackAlpha,\transfunction,\qstate_0)$ in which 
$(\ControlStates,\StackAlpha,\transfunction)$ is a pushdown system and
$\qstate_0 \in \ControlStates$ is an initial (root) state.
$\mathbb{RPDS}$  is the class of all rooted pushdown
systems.

For a rooted pushdown system $M =
(\ControlStates,\StackAlpha,\transfunction,\qstate_0)$, we define a
the \defterm{root-reachable transition relation}:
\begin{equation*}
  \conf \RPDTrans_M^{\stackact} \conf' \text{ iff } 
  (\qstate_0,\vect{})
  \mathrel{\PDTrans_M^*} 
  \conf 
  \text{ and }
  \conf 
  \mathrel{\PDTrans_M^\stackact}
  \conf'
  \text.
\end{equation*}
In other words, the root-reachable transition relation also makes
sure that the root control state can actually reach the transition.

We overload the root-reachable transition relation to operate on
control states as well:
\begin{equation*}
  \qstate 
   \mathrel{\RPDTrans_M^\stackact}
  \qstate' \text{ iff }
  (\qstate,\vec{\stackchar}) 
   \mathrel{\RPDTrans_M^\stackact}
  (\qstate',\vecp{\stackchar}) 
  \text{ for some stacks }
  \vec{\stackchar},
  \vecp{\stackchar}
  \text.
\end{equation*}
For both root-reachable relations, if we elide the stack-action label,
then, as in the un-rooted case, the transition holds if \emph{there
  exists} some stack action that enables the transition:
\begin{align*}
  \qstate 
   \mathrel{\RPDTrans_M}
  \qstate' \text{ iff }
  \qstate 
   \mathrel{\RPDTrans_M^{\stackact}}
  \qstate' 
  \text{ for some action } \stackact
  \text.
\end{align*}

\subsection{Pushdown automata}
A \defterm{pushdown automaton} is a generalization
of a rooted pushdown system, a 7-tuple
$(\QStates,\Alphabet,\StackAlpha,\transition,\qstate_0,\FStates,\vec{\stackchar})$ in which:
\begin{enumerate}
\item $\Alphabet$ is an input alphabet; 

\item $\transfunction \subseteq \ControlStates \times \StackAlpha_\pm
  \times (\Alphabet \union \set{\epsilon}) \times \ControlStates$ is a
  transition relation; 

\item $\FStates \subseteq \QStates$ is a set of accepting states; and

\item $\vec{\stackchar} \in \StackAlpha^*$ is the initial stack.

\end{enumerate}
We use $\mathbb{PDA}$ to denote the class of all pushdown automata.

Pushdown automata recognize languages over their input alphabet.
To do so, their transition relation may optionally consume an input
character upon transition.
Formally, a PDA $M = (\QStates,\Alphabet,\StackAlpha,\transition,\qstate_0,\FStates,\vec{\stackchar})$
recognizes the language $\Lang(M) \subseteq \Alphabet^*$:
\begin{align*}
\epsilon \in \Lang(M) &\text{ if }
\qstate_0 \in \FStates
% \\
% w \in \Lang(M) &\text{ if }
% \transition(\qstate_0,\stackchar_+,\epsilon,\qstate')
% \text{ and }
% w \in \Lang(\qstate',\Alphabet,\StackAlpha,\transition,\qstate, \FStates, \stackchar : \vec{\stackchar})
% \\
% w \in \Lang(M) &\text{ if }
% \transition(\qstate_0,\epsilon,\epsilon,\qstate')
% \text{ and }
% w \in \Lang(\qstate',\Alphabet,\StackAlpha,\transition,\qstate, \FStates, \vec{\stackchar})
% \\
% w \in \Lang(M) &\text{ if }
% \transition(\qstate_0,\stackchar_-,\epsilon,\qstate')
% \text{ and }
% w \in \Lang(\qstate',\Alphabet,\StackAlpha,\transition,\qstate, \FStates, \vec{\stackchar}')
% \\
% &\text{ where } \vec{\stackchar} = \vect{\stackchar_1,\stackchar_2,\ldots,\stackchar_n}
% \text{ and } \vec{\stackchar}' = \vect{\stackchar_2,\ldots,\stackchar_n}
%
\\
aw \in \Lang(M) &\text{ if }
\transition(\qstate_0,\stackchar_+,a,\qstate')
\text{ and }
w \in \Lang(\QState,\Alphabet,\StackAlpha,\transition,\qstate', \FStates, \stackchar : \vec{\stackchar})
\\
aw \in \Lang(M) &\text{ if }
\transition(\qstate_0,\epsilon,a,\qstate')
\text{ and }
w \in \Lang(\QState,\Alphabet,\StackAlpha,\transition,\qstate', \FStates, \vec{\stackchar})
\\
aw \in \Lang(M) &\text{ if }
\transition(\qstate_0,\stackchar_-,a,\qstate')
\text{ and }
w \in \Lang(\QState,\Alphabet,\StackAlpha,\transition,\qstate', \FStates, \vec{\stackchar}')
\\
&\text{ where } \vec{\stackchar} = \vect{\stackchar,\stackchar_2,\ldots,\stackchar_n}
\text{ and } \vec{\stackchar}' = \vect{\stackchar_2,\ldots,\stackchar_n}
\text,
\end{align*}
where $a$ is either the empty string $\epsilon$ or a single character.

%\subsection{Dyck languages}

\section{Setting: A-Normal Form 
  \texorpdfstring{$\boldsymbol\lambda$}{Lambda}-calculus}
\label{sec:anf}

Since our goal is to create pushdown control-flow analyses of
\emph{higher-order languages}, we choose to operate on the
\lc{}.
For simplicity of the concrete and abstract semantics, we choose to
analyze programs in A-Normal Form, however this is strictly a cosmetic
choice; all of our results can be replayed \emph{mutatis mutandis} in
a direct-style setting.
ANF enforces an order of evaluation and it requires
that all arguments to a function be atomic:
\begin{align*}
 \expr \in \syn{Exp} &\produces \letiform{\vv}{\call}{\expr} && \text{[non-tail call]}
 \\
 &\;\;\opor\;\; \call && \text{[tail call]}
 \\
 &\;\;\opor\;\; \aexpr && \text{[return]}
 \\
 \fexpr,\aexpr \in \syn{Atom} &\produces \vv \opor \lam && \text{[atomic expressions]}
 \\
 \lam \in \syn{Lam} &\produces \lamform{\vv}{\expr} && \text{[lambda terms]}
 \\
 \call \in \syn{Call} &\produces \appform{\fexpr}{\aexpr} && \text{[applications]}
 \\
 \vv \in \syn{Var} &\text{ is a set of identifiers} && \text{[variables]}
 \text.
\end{align*}

We use the CESK machine of \citet{mattmight:Felleisen:1987:CESK} to specify the semantics of
ANF.
We have chosen the CESK machine because it has an explicit stack, and
under abstraction, the stack component of our CESK machine will become
the stack component of a pushdown system.

First, we define a set of configurations ($\s{Conf}$) for this machine:
\begin{align*}
  \conf \in \s{Conf} &= \syn{Exp} \times \s{Env} \times \s{Store} \times \s{Kont} && \text{[configurations]}
  \\
  \env \in \s{Env} &= \syn{Var} \parto \s{Addr} && \text{[environments]}
  \\
  \store \in \s{Store} &= \s{Addr} \to \s{Clo} && \text{[stores]}
  % \\
  % \den \in \s{Den} &= \s{Clo} && \text{[denotable values]}
  \\
  \clo \in \s{Clo} &= \syn{Lam} \times \s{Env} && \text{[closures]}
  \\
  \cont \in \s{Kont} &= \s{Frame}^* && \text{[continuations]}
  \\
  \phrame \in \s{Frame} &= \syn{Var} \times \syn{Exp} \times \s{Env}  && \text{[stack frames]}
  \\
  \addr \in \s{Addr} &\text{ is an infinite set of addresses} && \text{[addresses]}
  \text.
\end{align*}
To define the semantics, we need five items:
\begin{enumerate}
\item $\Inject : \syn{Exp} \to \s{Conf}$ injects an expression into
  a configuration.

\item $\ArgEval : \syn{Atom} \times \s{Env} \times \s{Store} \parto
  \s{Clo}$ evaluates atomic expressions.

\item $\Eval : \syn{Exp} \to \Pow{\s{Conf}}$ computes the set of
  reachable machine configurations for a given program.

\item $(\To) \subseteq \s{Conf} \times \s{Conf}$ transitions between configurations.

\item $\alloc : \syn{Var} \times \s{Conf} \to \s{Addr}$ 
  chooses fresh store addresses for newly bound variables.

\end{enumerate}

\paragraph{Program injection}
The program injection function pairs an expression with an empty environment,
an empty store and an empty stack to create the initial configuration:
\begin{equation*}
  \conf_0 = \Inject(\expr) = (\expr, [], [], \vect{})
  \text.
\end{equation*}

\paragraph{Atomic expression evaluation}
The atomic expression evaluator, $\ArgEval : \syn{Atom} \times \s{Env}
\times \s{Store} \parto \s{Clo}$, returns the value of an atomic expression
in the context of an environment and a store:
\begin{align*}
  \ArgEval(\lam,\env,\store) &= (\lam,\env) && \text{[closure creation]}
  \\
  \ArgEval(\vv,\env,\store) &= \store(\env(\vv)) && \text{[variable look-up]}
  \text.
\end{align*}

\paragraph{Reachable configurations}
The evaluator $\Eval : \syn{Exp} \to \Pow{\s{Conf}}$ returns all 
configurations reachable from the initial configuration:
\begin{equation*}
  \Eval(\expr) = \setbuild{ \conf }{ \Inject(\expr) \To^* \conf } 
  \text.
\end{equation*}

\paragraph{Transition relation}
To define the transition $\conf \To \conf'$, we need three rules.
The first rule handles tail calls by evaluating the function into a
closure, evaluating the argument into a value and then moving to the
body of the \lamterm{} within the closure:
\begin{align*}
  \overbrace{(\sembr{\appform{\fexpr}{\aexpr}}, \env, \store, \cont)}^{\conf}
  &\To
  \overbrace{(\expr,\env'',\store',\cont)}^{\conf'}
  \text{, where }
  \\
  (\sembr{\lamform{\vv}{\expr}}, \env') &= \ArgEval(\fexpr,\env,\store)
  \\
  \addr &= \alloc(\vv,\conf)
  \\
  \env'' &= \env'[\vv \mapsto \addr]
  \\
  \store' &= \store[\addr \mapsto \ArgEval(\aexpr,\env,\store)]
  \text.
\end{align*}

\noindent
Non-tail call pushes a frame onto the stack and evaluates the call:
% \begin{align*}
%   \overbrace{(\sembr{\letiform{\vv}{\appform{\fexpr}{\aexpr}}{\expr}}, \env, \store, \cont)}^{\conf}
%   &\To
%   \overbrace{(\expr',\env'',\store',\phrame : \cont)}^{\conf'}
%   \text{, where }
%   \\
%   (\sembr{\lamform{\vv'}{\expr'}}, \env') &= \ArgEval(\fexpr,\env,\store)
%   \\
%   \addr &= \alloc(\vv,\state)
%   \\
%   \env'' &= \env'[\vv' \mapsto \addr]
%   \\
%   \store' &= \store[\addr \mapsto \ArgEval(\aexpr,\env,\store)]
%   \\
%   \phrame &= (\vv,\expr,\env)
%   \text.
% \end{align*}
\begin{align*}
  \overbrace{(\sembr{\letiform{\vv}{\call}{\expr}}, \env, \store, \cont)}^{\conf}
  &\To
  \overbrace{(\call,\env,\store, (\vv,\expr,\env) : \cont)}^{\conf'}
  \text.
\end{align*}

\noindent
Function return pops a stack frame:
\begin{align*}
  \overbrace{(\aexpr, \env, \store, (\vv,\expr,\env') : \cont)}^{\conf}
  &\To
  \overbrace{(\expr,\env'',\store', \cont)}^{\conf'}
  \text{, where }
  \\
  \addr &= \alloc(\vv,\conf)
  \\
  \env'' &= \env'[\vv \mapsto \addr]
  \\
  \store' &= \store[\addr \mapsto \ArgEval(\aexpr,\env,\store)]
  \text.  
\end{align*}

\paragraph{Allocation}
The address-allocation function is an opaque parameter in this
semantics.
We have done this so that the forthcoming abstract semantics may also
parameterize allocation, and in so doing provide a knob to tune the
polyvariance and context-sensitivity of the resulting analysis.
For the sake of defining the concrete semantics, letting addresses be
natural numbers suffices, and then the allocator can choose the lowest
unused address:
\begin{align*}
  \s{Addr} &= \mathbb{N}
  \\
  \alloc(v,(\expr,\env,\store,\cont)) &= 
  1 + \max(\dom(\store))
  \text.
\end{align*}

\section{An infinite-state abstract interpretation}
\label{sec:abstraction}

Our goal is to statically bound the higher-order control-flow of the
CESK machine of the previous section.
So, we are going to conduct an abstract interpretation.

Since we are concerned with return-flow precision, we are going to
abstract away less information than we normally would.
Specifically, we are going to construct an infinite-state abstract
interpretation of the CESK machine by leaving its stack unabstracted.
(With an infinite-state abstraction, the usual approach for computing
the static analysis---exploring the abstract configurations reachable
from some initial configuration---simply will not work.
Subsequent sections focus on finding an algorithm that can
compute a finite representation of the reachable abstract
configurations of the abstracted CESK machine.)

For the abstract interpretation of the CESK machine, we need an
abstract configuration-space (Figure~\ref{fig:abs-conf-space}).
To construct one, we force addresses to be a finite set, but
crucially, we leave the stack untouched.
When we compact the set of addresses into a finite set, the machine
may run out of addresses to allocate, and when it does, the
pigeon-hole principle will force multiple closures to reside at the
same address.
As a result, we have no choice but to force the range of the store to
become a power set in the abstract configuration-space.
To construct the abstract transition relation, we need five components
analogous to those from the concrete semantics.

\begin{figure}
%% \begin{small}
\begin{align*}
  \aconf \in \sa{Conf} &= \syn{Exp} \times \sa{Env} \times \sa{Store} \times \sa{Kont} && \text{[configurations]}
  \\
  \aenv \in \sa{Env} &= \syn{Var} \parto \sa{Addr} && \text{[environments]}
  \\
  \astore \in \sa{Store} &= \sa{Addr} \to \Pow{\sa{Clo}} && \text{[stores]}
  % \\
  % \aden \in \sa{Den} &= \Pow{\sa{Clo}} && \text{[denotable values]}
  \\
  \aclo \in \sa{Clo} &= \syn{Lam} \times \sa{Env} && \text{[closures]}
  \\
  \acont \in \sa{Kont} &= \sa{Frame}^* && \text{[continuations]}
  \\
  \aphrame \in \sa{Frame} &= \syn{Var} \times \syn{Exp} \times \sa{Env}  && \text{[stack frames]}
  \\
  \aaddr \in \sa{Addr} &\text{ is a \emph{finite} set of addresses} && \text{[addresses]}
  \text.
\end{align*}
\caption{The abstract configuration-space.}
\label{fig:abs-conf-space}
%% \end{small}
\end{figure}

\paragraph{Program injection}
The abstract injection function $\aInject : \syn{Exp} \to \sa{Conf}$
pairs an expression with an empty environment, an empty store and an
empty stack to create the initial abstract configuration:
\begin{equation*}
  \aconf_0 = \aInject(\expr) = (\expr, [], [], \vect{})
  \text.
\end{equation*}

\paragraph{Atomic expression evaluation}
The abstract atomic expression evaluator, $\aArgEval : \syn{Atom}
\times \sa{Env} \times \sa{Store} \to \PowSm{\sa{Clo}}$, returns the value of
an atomic expression in the context of an environment and a store;
note how it returns a set:
\begin{align*}
  \aArgEval(\lam,\aenv,\astore) &= \set{(\lam,\env)} && \text{[closure creation]}
  \\
  \aArgEval(\vv,\aenv,\astore) &= \astore(\aenv(\vv)) && \text{[variable look-up]}
  \text.
\end{align*}

\paragraph{Reachable configurations}
The abstract program evaluator $\aEval : \syn{Exp} \to
\PowSm{\sa{Conf}}$ returns all of the configurations reachable from
the initial configuration:
\begin{equation*}
  \aEval(\expr) = \setbuild{ \aconf }{ \aInject(\expr) \aTo^* \aconf } 
  \text.
\end{equation*}
Because there are an infinite number of abstract configurations, a
na\"ive implementation of this function may not terminate.
In Sections~\ref{sec:pda} through \ref{sec:ecg-worklist}, we show that
there is a way to compute a finite representation of this set.

\paragraph{Transition relation}
The abstract transition relation $(\aTo) \subseteq \sa{Conf} \times
\sa{Conf}$ has three rules, one of which has become nondeterministic.
A tail call may fork because there could be multiple abstract closures
that it is invoking:
\begin{align*}
  \overbrace{(\sembr{\appform{\fexpr}{\aexpr}}, \aenv, \astore, \acont)}^{\aconf}
  &\aTo
  \overbrace{(\expr,\aenv'',\astore',\acont)}^{\aconf'}
  \text{, where }
  \\
  (\sembr{\lamform{\vv}{\expr}}, \aenv') &\in \aArgEval(\fexpr,\aenv,\astore)
  \\
  \aaddr &= \aalloc(\vv,\aconf)
  \\
  \aenv'' &= \aenv'[\vv \mapsto \aaddr]
  \\
  \astore' &= \astore \join [\aaddr \mapsto \aArgEval(\aexpr,\aenv,\astore)]
  \text.
\end{align*}
We define all of the partial orders shortly, but for stores:
\begin{equation*}
  (\astore \join \astore')(\aaddr) = \astore(\aaddr) \union \astore'(\aaddr)
  \text.
\end{equation*}

\noindent
A non-tail call pushes a frame onto the stack and evaluates the call:
\begin{align*}
  \overbrace{(\sembr{\letiform{\vv}{\call}{\expr}}, \aenv, \astore, \acont)}^{\aconf}
  &\aTo
  \overbrace{(\call,\aenv,\astore, (\vv,\expr,\aenv) : \acont)}^{\aconf'} 
  \text.
\end{align*}

\noindent
A function return pops a stack frame:
\begin{align*}
  \overbrace{(\aexpr, \aenv, \astore, (\vv,\expr,\aenv') : \acont)}^{\aconf}
  &\aTo
  \overbrace{(\expr,\aenv'',\astore', \acont)}^{\aconf'}
  \text{, where }
  \\
  \aaddr &= \aalloc(\vv,\aconf)
  \\
  \aenv'' &= \aenv'[\vv \mapsto \aaddr]
  \\
  \astore' &= \astore \join [\aaddr \mapsto \aArgEval(\aexpr,\aenv,\astore)]
  \text.  
\end{align*}

\paragraph{Allocation, polyvariance and context-sensitivity}
\label{sec:polyvariance}
In the abstract semantics, the abstract allocation function
$\aalloc : \syn{Var} \times \sa{Conf} \to \sa{Addr}$ determines the
polyvariance of the analysis (and, by extension, its
context-sensitivity).
In a control-flow analysis, \emph{polyvariance} literally refers to
the number of abstract addresses (variants) there are for each
variable.
By selecting the right abstract allocation function, we can
instantiate pushdown versions of classical flow analyses.

\subparagraph{Monovariance: Pushdown 0CFA}

Pushdown 0CFA uses variables themselves for abstract addresses:

\begin{align*}
  \sa{Addr} &= \syn{Var}
  \\
  \alloc(v,\aconf) &= v
  \text.
\end{align*}

\subparagraph{Context-sensitive: Pushdown 1CFA}

Pushdown 1CFA pairs the variable with the current expression to get
an abstract address:

\begin{align*}
  \sa{Addr} &= \syn{Var} \times \syn{Exp}
  \\
  \alloc(\vv,(\expr,\aenv,\astore,\acont)) &= (\vv,\expr)
  \text.
\end{align*}

\subparagraph{Polymorphic splitting: Pushdown poly/CFA}

Assuming we compiled the program from a programming language with
let-bound polymorphism and marked which functions were let-bound, we
can enable polymorphic splitting:

\begin{align*}
  \sa{Addr} &= \syn{Var} + \syn{Var} \times \syn{Exp} 
  \\
  \alloc(\vv,(\sembr{\appform{\fexpr}{\aexpr}},\aenv,\astore,\acont)) &= 
  \begin{cases}
    (\vv,\sembr{\appform{\fexpr}{\aexpr}}) & \fexpr \text{ is let-bound}
    \\
    \vv & \text{otherwise}
    \text.
  \end{cases}
\end{align*}

\subparagraph{Pushdown $k$-CFA}

For pushdown $k$-CFA, we need to look beyond the current state
and at the last $k$ states.
By concatenating the expressions in the last $k$ states together, and
pairing this sequence with a variable we get pushdown $k$-CFA:
\begin{align*}
  \sa{Addr} &= \syn{Var} \times \syn{Exp}^k
  \\
  \aalloc(\vv,\vect{(\expr_1,\aenv_1,\astore_1,\acont_1),\ldots}) &=
  (\vv,\vect{\expr_1,\ldots,\expr_k})
  \text.
\end{align*}

\subsection{Partial orders}

For each set $\hat X$ inside the abstract configuration-space, we use the
natural partial order, $(\wt_{\hat X}) \subseteq \hat X \times \hat X$.
Abstract addresses and syntactic sets have flat partial orders.
For the other sets:
\begin{itemize}
    % Environments
  \item The partial order lifts point-wise over environments:
    \begin{equation*}
      \aenv \wt \aenv' \text{ iff }
      \aenv(\vv) = \aenv'(\vv) \text{ for all } \vv \in \dom(\aenv)
      \text.
    \end{equation*}

    % Closures
  \item The partial orders lifts component-wise over closures:
    \begin{equation*}
      (\lam,\aenv) \wt (\lam,\aenv') \text{ iff } \aenv \wt \aenv'
      \text.
    \end{equation*}
    
    % Stores
  \item The partial order lifts point-wise over stores:
    \begin{equation*}
      \astore \wt \astore' 
      \text{ iff }
      \astore(\aaddr) \wt \astore'(\aaddr) \text{ for all } \aaddr
      \in \dom(\astore)
      \text.
    \end{equation*}
    
    % Frames
  \item The partial order lifts component-wise over frames:
    \begin{equation*}
      (\vv,\expr,\aenv) \wt (\vv,\expr,\aenv') \text{ iff } \aenv \wt \aenv'
      \text.
    \end{equation*}
    
    %  Continuations
  \item The partial order lifts element-wise over continuations:
    \begin{equation*}
      \vect{\aphrame_1,\ldots,\aphrame_n} 
      \wt 
      \vect{\aphrame'_1,\ldots,\aphrame'_n} 
      \text{ iff }
      \aphrame_i \wt \aphrame'_i
      \text.
    \end{equation*}
    
    %  Configurations
  \item The partial order lifts component-wise across configurations:
    \begin{equation*}
      (\expr,\aenv,\astore,\acont) 
      \wt
      (\expr,\aenv',\astore',\acont')
      \text{ iff }
      \aenv \wt \aenv'
      \text{ and }
      \astore \wt \astore'
      \text{ and }
      \acont \wt \acont'
      \text.
    \end{equation*}
  \end{itemize}

% \todo{Order over sets of closures.}

\subsection{Soundness}

To prove soundness, we need an abstraction map $\absmap$ that connects the
concrete and abstract configuration-spaces:
\begin{align*}
  \absmap(\expr,\env,\store,\cont) &= (\expr,\absmap(\env),\absmap(\store),\absmap(\cont))
  \\
  \absmap(\env) &= \lt{\vv}{\absmap(\env(\vv))}
  \\
  \absmap(\store) &= \lt{\aaddr}{\!\!\! \bigjoin_{\absmap(\addr) = \aaddr} \!\!\! \set{\absmap(\store(\addr))}}
  \\
  \absmap\vect{\phrame_1,\ldots,\phrame_n} &= \vect{\absmap(\phrame_1), \ldots, \absmap(\phrame_n)}
  \\
  \absmap(\vv,\expr,\env) &= (\vv,\expr,\absmap(\env))
  \\
  \absmap(\addr) &\text{ is determined by the allocation functions}
  \text.
\end{align*}

It is easy to prove that the abstract transition relation
simulates the concrete transition relation:
\begin{theorem}
  If:
  \begin{equation*}
    \absmap(\conf) \wt \aconf \text{ and } \conf \To \conf'
    \text,
  \end{equation*}
  then there must exist $\aconf' \in \sa{Conf}$ such that:
  \begin{equation*}
    \absmap(\conf') \wt \aconf' \text{ and } \aconf \To \aconf'
    \text.
  \end{equation*}  
\end{theorem}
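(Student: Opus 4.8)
The plan is to proceed by case analysis on the concrete transition $\conf \To \conf'$, of which there are exactly three: a tail call, a non-tail call, and a function return. In each case I would first use the hypothesis $\absmap(\conf) \wt \aconf$ to pin down the shape of $\aconf$. Because abstraction leaves the expression (control) component untouched, and the component-wise order on configurations forces $\aconf$ to agree with $\conf$ on that component, $\aconf$ must be a configuration on which the matching rule of the abstract transition relation $(\aTo)$ fires. I would then apply that abstract rule to produce a candidate $\aconf'$ and verify $\absmap(\conf') \wt \aconf'$ component by component.

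The workhorse I would isolate first is a soundness lemma for atomic evaluation: whenever $\absmap(\env) \wt \aenv$ and $\absmap(\store) \wt \astore$, the abstraction of the concrete value $\absmap(\ArgEval(\aexpr,\env,\store))$ is $\wt$-below some element of the abstract set $\aArgEval(\aexpr,\aenv,\astore)$. For the lambda case this is immediate, since closure creation merely repackages the environment, which is related by assumption; for the variable case it follows by chasing the definition of $\absmap$ on stores, using that $\absmap(\store)(\absmap(\addr))$ subsumes $\absmap(\store(\addr))$ by definition of the store abstraction and that $\astore \st \absmap(\store)$. In the tail-call case this lemma is exactly what resolves the nondeterminism of the abstract rule: I choose the abstract closure that over-approximates the concrete one returned by $\ArgEval(\fexpr,\env,\store)$.

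The remaining obligations are monotonicity bookkeeping. Extending an environment as $\aenv''=\aenv'[\vv\mapsto\aaddr]$ preserves the order provided $\absmap(\addr)=\aaddr$, so I need the two allocators to be compatible in the sense that $\absmap(\alloc(\vv,\conf))=\aalloc(\vv,\aconf)$; this holds for each concrete allocator considered, because the abstract allocator depends only on components (expressions) that abstraction preserves exactly. For the store I would show $\absmap(\store[\addr\mapsto \ArgEval(\aexpr,\env,\store)]) \wt \astore \join [\aaddr \mapsto \aArgEval(\aexpr,\aenv,\astore)]$, relying on the fact that the abstract update is a join (weak update) rather than a replacement. The continuation component needs only that the element-wise stack order survive the manipulations performed by the rules: an unchanged stack in the tail call, a pushed related frame $(\vv,\expr,\aenv)$ in the non-tail call, and a popped related top frame in the return all preserve it.

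The hard part, and the one subtlety worth stating carefully, is precisely this store-update step together with the allocation-compatibility side condition: it is where the design choice of a joining store and the clause ``$\absmap(\addr)$ is determined by the allocation functions'' earn their keep, since a concrete strong update at $\addr$ must be dominated by an abstract weak update at $\absmap(\addr)$ even when several concrete addresses collapse onto a single $\absmap(\addr)$. Once this inequality and the atomic-evaluation lemma are in hand, assembling the three cases yields the required $\aconf'$ with $\absmap(\conf')\wt\aconf'$ and $\aconf \aTo \aconf'$.
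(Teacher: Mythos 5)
Your proposal is correct and follows essentially the same route as the paper, which simply asserts a case-wise analysis on the expression type (equivalently, the transition rule) and defers the details to Might's dissertation on $k$-CFA. The ingredients you make explicit---the soundness lemma for $\aArgEval$ versus $\ArgEval$, the compatibility condition $\absmap(\alloc(\vv,\conf)) = \aalloc(\vv,\aconf)$ licensed by the clause that $\absmap$ on addresses is ``determined by the allocation functions,'' and the domination of a concrete strong update by the abstract join at a collapsed address---are precisely the steps that citation elides, so your write-up is a faithful expansion rather than a different proof.
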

\begin{proof}
  The proof follows by case-wise analysis on the type of the
  expression in the configuration.
  It is a straightforward adaptation of similar proofs, such as that of
  \citet{mattmight:Might:2007:Dissertation} for $k$-CFA.
\end{proof}

\section{From the abstracted CESK machine
  to a PDA}
\label{sec:pda}
In the previous section, we constructed an infinite-state abstract
interpretation of the CESK machine.
The infinite-state nature of the abstraction makes it difficult to see how
to answer static analysis questions.
Consider, for instance, a control flow-question:
\begin{quote}
  At the call site $\appform{\fexpr}{\aexpr}$, may a closure over
  $\lam$ be called?
\end{quote}
If the abstracted CESK machine were a finite-state machine, an
algorithm could answer this question by enumerating all reachable
configurations and looking for an abstract configuration
$(\sembr{\appform{\fexpr}{\aexpr}},\aenv,\astore,\acont)$ in which
$(\lam,\_) \in \aArgEval(\fexpr,\aenv,\astore)$.
However, because the abstracted CESK machine may contain an infinite
number of reachable configurations, an algorithm cannot enumerate
them.

Fortunately, we can recast the abstracted CESK as a special kind of
infinite-state system: a pushdown automaton (PDA).
Pushdown automata occupy a sweet spot in the theory of computation:
they have an infinite configuration-space, yet many useful properties
(\eg{} word membership, non-emptiness, control-state
reachability) remain decidable.
Once the abstracted CESK machine becomes a PDA, we can answer the
control-flow question by checking whether a specific regular language,
when intersected with the language of the PDA, turns into  the empty 
language.

The recasting as a PDA is a shift in perspective.
A configuration has an expression, an environment and a store. 
A stack character is a frame.
% %
% A configuration may be decomposed into a finite set of control
% states---$\syn{Exp} \times \sa{Env} \times \sa{Store}$---and a stack
% over the alphabet $\sa{Frame}$.\footnote{
%   Not convinced the the set of control states is finite?
% %
%   Because abstract addresses are finite, and there are a finite number of variables in any given program, there are a finite number of abstract environments.
% %
%   A finite number of abstract environments, coupled with a finite number of \lamterm s for any given program implies a finite number of closures.
% %
%   A finite number of addresses and closures implies a finite number of abstract stores.
% %
%   All together, with a finite number of expressions, this implies a finite number of control states.
% }
%
We choose to make the alphabet the set of control states, so that the
language accepted by the PDA will be sequences of control-states
visited by the abstracted CESK machine.
Thus, every transition will consume the control-state to which it
transitioned as an input character.
Figure~\ref{fig:acesk-to-pda} defines the program-to-PDA conversion
function $\afPDA : \syn{Exp} \to \mathbb{PDA}$.  (Note the implicit
use of the isomorphism $\QState \times \sa{Kont} \cong \sa{Conf}$.)

\begin{figure}
%% \begin{small}
\begin{align*}
  \afPDA(\expr) &= (\QStates,\Alphabet,\StackAlpha,\transfunction,\qstate_0,\FStates,\vect{})
  \text{, where }
  \\
  \QStates &= \syn{Exp} \times \sa{Env} \times \sa{Store}
  \\
  \Alphabet &= \QStates
  \\
  \StackAlpha &= \sa{Frame}
  \\
  (\qstate,\epsilon,\qstate',\qstate') \in \transfunction
  & \text{ iff }
  (\qstate, \acont)
  \aTo
  (\qstate', \acont)
  \text{ for all } \acont
  \\
  (\qstate,\aphrame_{-},\qstate',\qstate') \in \transfunction
  & \text{ iff }
  (\qstate, \aphrame : \acont)
  \aTo
  (\qstate',\acont)
  \text{ for all } \acont
  \\
  (\qstate,\aphrame'_{+},\qstate',\qstate') \in \transfunction
  & \text{ iff }
  (\qstate, \acont)
  \aTo
  (\qstate',\aphrame' : \acont)
  \text{ for all } \acont
  \\
  (\qstate_0,\vect{}) &= \aInject(\expr)
  \\
  \FStates &= \QStates
  \text.
\end{align*}
\caption{$\afPDA : \syn{Exp} \to \mathbb{PDA}$.
}
\label{fig:acesk-to-pda}
%% \end{small}
\end{figure}

At this point, we can answer questions about whether a specified
control state is reachable by formulating a question about the
intersection of a regular language with a context-free language
described by the PDA.
That is, if we want to know whether the control state
$(\expr',\aenv,\astore)$ is reachable in a program $\expr$, we can reduce the problem to determining:
\begin{equation*}
  \Alphabet^*  \cdot \set{ (\expr',\aenv,\astore) } \cdot \Alphabet^* \mathrel{\cap} \Lang(\afPDA(\expr))
  \neq \emptyset
  \text,
\end{equation*}
where $L_1 \cdot L_2$ is the concatenation of formal languages $L_1$ and $L_2$.

\begin{theorem}
  Control-state reachability is decidable.
\end{theorem}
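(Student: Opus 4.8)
The plan is to reduce control-state reachability to the emptiness problem for context-free languages. The reduction is already displayed above the statement, and its correctness follows from the construction of $\afPDA(\expr)$: every transition consumes as its input character the control state into which it moves, and every state is accepting ($\FStates = \QStates$), so the words of $\Lang(\afPDA(\expr))$ are exactly the sequences of control states labeling runs from the initial configuration. Consequently a control state $(\expr',\aenv,\astore)$ is reachable if and only if it occurs in some accepted word (the initial state being reachable by definition), which is exactly the condition
\[
  \Alphabet^* \cdot \set{(\expr',\aenv,\astore)} \cdot \Alphabet^* \mathrel{\cap} \Lang(\afPDA(\expr)) \neq \emptyset
  \text.
\]
It therefore suffices to test this intersection for emptiness effectively.

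First I would check that the input alphabet $\Alphabet = \QStates$ is finite, which is what makes $\afPDA(\expr)$ a genuine pushdown automaton and the flanking language genuinely regular. For a fixed program the syntactic sets $\syn{Exp}$, $\syn{Lam}$, and $\syn{Var}$ are finite; since $\sa{Addr}$ is finite by construction of the abstract configuration-space, the environments $\sa{Env} = \syn{Var} \parto \sa{Addr}$ are finite, hence so are the closures $\sa{Clo} = \syn{Lam} \times \sa{Env}$, and hence so are the stores $\sa{Store} = \sa{Addr} \to \Pow{\sa{Clo}}$. Assembling the components gives finiteness of $\QStates = \syn{Exp} \times \sa{Env} \times \sa{Store}$. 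With $\Alphabet$ finite, $\Alphabet^* \cdot \set{(\expr',\aenv,\astore)} \cdot \Alphabet^*$ is recognized by a three-state finite automaton (read a prefix, consume the target once, read a suffix) and so is regular.

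Two textbook closure results then close the argument: the intersection of a context-free language with a regular language is context-free, obtained effectively by a product construction that runs the PDA and the finite automaton in lockstep; and emptiness of a context-free language is decidable. Composing the reduction with these procedures decides control-state reachability. The product construction and the emptiness test are standard and I would merely cite them; the only step that leans on the specifics of our abstraction---and the one I expect to require actual care---is the finiteness of $\QStates$, since it is precisely what licenses calling the flanking language regular and $\afPDA(\expr)$ a pushdown automaton. Everything after that is generic pushdown theory.
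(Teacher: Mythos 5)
Your proof is correct and follows essentially the same route as the paper's: reduce reachability to non-emptiness of the intersection of $\Alphabet^* \cdot \set{(\expr',\aenv,\astore)} \cdot \Alphabet^*$ with $\Lang(\afPDA(\expr))$, then invoke closure of context-free languages under intersection with regular languages and decidability of context-free emptiness. The paper states only these two closure facts and leaves the rest implicit, so your added care---verifying that $\QStates$ is finite (hence that $\afPDA(\expr)$ is a genuine PDA and the flanking language regular) and handling the initial state separately---is a faithful elaboration rather than a different argument.
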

\begin{proof}
  The intersection of a regular language and a context-free language
  is context-free.
  The emptiness of a context-free language is decidable.
\end{proof}

Now, consider how to use control-state reachability to answer the
control-flow question from earlier.
There are a finite number of possible control states in which the
\lamterm{} $\lam$ may flow to the function $\fexpr$ in call site
$\appform{\fexpr}{\aexpr}$; let's call the this set of states $\hat
S$:
\begin{equation*}
  \hat S = 
  \setbuild{ (\sembr{\appform{\fexpr}{\aexpr}}, \aenv, \astore) }{
    (\lam,\aenv') \in \aArgEval(\fexpr,\aenv,\astore)
    \text{ for some } \aenv'
  }
  \text.
\end{equation*}
What we want to know is whether any state in the set $\hat S$ is
reachable in the PDA.
In effect what we are asking is whether there exists a control state $\qstate \in
\hat S$ such that:
\begin{align*}
  \Alphabet^* \cdot \set{\qstate} \cdot \Alphabet^*
  \mathrel{\cap}
  \Lang(\afPDA(\expr))
  \neq 
  \emptyset
  \text.
\end{align*}
If this is true, then $\lam$ may flow to $\fexpr$; if false, then it
does not.

%??? Move to front: Main problem.%
\paragraph{Problem: Doubly exponential complexity}
The non-emptiness-of-intersection approach establishes decidability of
pushdown control-flow analysis.
But, two exponential complexity barriers make this technique
impractical.

First, there are an exponential number of both environments
($\abs{\sa{Addr}}^{\abs{\syn{Var}}}$) and stores ($2^{\abs{\sa{Clo}}
  \times {\abs{\sa{Addr}}}}$) to consider for the set $\hat S$.
On top of that, computing the intersection of a regular language with
a context-free language will require enumeration of the
(exponential) control-state-space of the PDA.
% %
% To compute control-flow information for the whole program, this whole
% process must be conducted for every pairing between \lamterm s and
% call sites.
%
As a result, this approach is doubly exponential.
For the next few sections, our goal will be to lower the complexity of
pushdown control-flow analysis.

\section{Focusing on reachability}
\label{sec:pdreachability}

In the previous section, we saw that control-flow analysis reduces to
the reachability of certain control states within a pushdown system.
We also determined reachability by converting the
abstracted CESK machine into a PDA, and using emptiness-testing on a
language derived from that PDA.
Unfortunately, we also found that this approach is deeply exponential.

Since control-flow analysis reduced to the reachability of
control-states in the PDA, we skip the language problems and go
directly to reachability algorithms of
\citet{mattmight:Bouajjani:1997:PDA-Reachability,mattmight:Kodumal:2004:CFL,mattmight:Reps:1998:CFL}
and \citet{mattmight:Reps:2005:Weighted-PDA} that determine the
reachable \emph{configurations} within a pushdown system.
These algorithms are even polynomial-time.
Unfortunately, some of them are polynomial-time in the number of
control states, and in the abstracted CESK machine, there are an
exponential number of control states.
We don't want to \emph{enumerate} the entire control state-space, or
else the search becomes exponential in even the best case.

% \section{Focusing on efficiency: Dyck state graphs}
% \label{sec:dsg}
To avoid this worst-case behavior, we present a straightforward
pushdown-reachability algorithm that considers only the
\emph{reachable} control states.
We cast our reachability algorithm as a fixed-point iteration, in
which we incrementally construct the reachable subset of a pushdown system.
We term these algorithms ``iterative Dyck state graph construction.''

A \defterm{Dyck state graph} is a compacted, rooted pushdown
system
$G = (\DSStates,\DSFrames,\DSEdges,\dsstate_0)$, in which:
\begin{enumerate}
  \item $\DSStates$ is a finite set of nodes;
  \item $\DSFrames$ is a set of frames;
  \item $\DSEdges \subseteq \DSStates \times \DSFrames_\pm \times
    \DSStates$ is a set of stack-action edges; and
  \item $\dsstate_0$ is an initial state;
\end{enumerate}
such that for any node $\dsstate \in \DSStates$, it must be the case that:
\begin{equation*}
  (\dsstate_0,\vect{}) \mathrel{\PDTrans_G^*} (\dsstate,\vec{\dsframe})
  \text{ for some stack } \vec{\dsframe}
  \text.
\end{equation*}
In other words, a Dyck state graph is equivalent to a rooted pushdown
system in which there is a legal path to every control state from the
initial control state.\footnote{We chose the term \emph{Dyck state
    graph} because the sequences of stack actions along valid paths
  through the graph correspond to substrings in Dyck languages.
  A \defterm{Dyck language} is a language of balanced, ``colored''
  parentheses.
  In this case, each character in the stack alphabet is a color.}

We use $\mathbb{DSG}$ to denote the class of Dyck state graphs.
Clearly:
\begin{equation*}
  \mathbb{DSG} \subset \mathbb{RPDS}
  \text.
\end{equation*}
A Dyck state graph is a rooted pushdown system with the ``fat''
trimmed off; in this case, unreachable control states and unreachable
transitions are the ``fat.''

We can formalize the connection between rooted pushdown systems and
Dyck state graphs with a map:
\begin{equation*}
  \fDSG : \mathbb{RPDS} \to
  \mathbb{DSG}\text.
\end{equation*}
Given a rooted pushdown system $M =
(\ControlStates,\StackAlpha,\transfunction,\qstate_0)$, its equivalent Dyck state
graph is $\fDSG(M) = (\DSStates,\DSFrames,\DSEdges,\qstate_0)$, where the set $\DSStates$ contains reachable nodes:
\begin{equation*}
  \DSStates = \setbuild{ \qstate }{ (\qstate_0,\vect{}) \mathrel{\PDTrans_M^*} (\qstate,\vec{\stackchar}) \text{ for some stack } \vec{\stackchar} }
  \text,
\end{equation*}
and the set $\DSEdges$ contains reachable edges:
\begin{equation*}
  \DSEdges = \setbuild{ \qstate \pdedge^{\stackact} \qstate' }{
    \qstate
    \mathrel{\RPDTrans_M^{\stackact}}
    \qstate'
%
    % (\qstate_0, \vect{}) 
    %  \mathrel{\PDTrans_M^*}
    % (\qstate, \vec{\stackchar})
    % \text{ and }
    % (\qstate, \vec{\stackchar})
    % \mathrel{\PDTrans_M^{\stackact}}
    % (\qstate', \vec{\stackchar}')
  }
  \text,
\end{equation*}
and $\dsstate_0 = \qstate_0$.

In practice, the real difference between a rooted pushdown system and
a Dyck state graph is that our rooted pushdown system will be defined
intensionally (having come from the components of an abstracted CESK
machine), whereas the Dyck state graph will be defined extensionally,
with the contents of each component explicitly enumerated during its construction.

Our near-term goals are (1) to convert our abstracted CESK machine into
a rooted pushdown system and (2) to find an \emph{efficient} method
for computing an equivalent Dyck state graph from a rooted pushdown
system.

% \section{Converting an abstracted CESK machine into a rooted pushdown system}
% \label{sec:acesk-to-rpds}
% When we converted the abstracted CESK machine into a pushdown
% automaton, our goal was to decide for the reachability of certain
% control states.
% %
% We are shifting our focus to finding an efficient algorithm that
% enumerates the set of reachable control states.
% %
% Thus, a PDA is more heavyweight than we need: we won't be checking for
% language inclusion anymore.
%
% Since all we care about are the reachable control states of a PDA (and
% not the language it recognizes), we can convert an abstracted CESK
% machine more directly into a rooted pushdown system.
%
To convert the abstracted CESK machine into a rooted pushdown system,
we use the function $\afRPDS : \syn{Exp} \to \mathbb{RPDS}$:
\begin{align*}
  \afRPDS(\expr) &= (\QStates,\StackAlpha,\transfunction,\qstate_0)
  \\
  \QStates &= \syn{Exp} \times \sa{Env} \times \sa{Store}
  \\
  \StackAlpha &= \sa{Frame}
  \\
  %(\qstate,\epsilon,\qstate') 
  \qstate \pdedge^\epsilon \qstate'
  \in \transfunction
  & \text{ iff }
  (\qstate, \acont)
  \aTo
  (\qstate', \acont)
  \text{ for all } \acont
  \\
  \qstate \pdedge^{\aphrame_{-}} \qstate'
  %(\qstate,\aphrame_{-},\qstate')
  \in \transfunction
  & \text{ iff }
  (\qstate, \aphrame : \acont) 
  \aTo
  (\qstate',\acont)
  \text{ for all } \acont
  \\
  % (\qstate,\aphrame'_{+},\qstate') 
  \qstate \pdedge^{\aphrame_{+}} \qstate' 
  \in \transfunction
  & \text{ iff }
  (\qstate, \acont)
  \aTo
  (\qstate',\aphrame : \acont)
  \text{ for all } \acont
  \\
  (\qstate_0,\vect{}) &= \aInject(\expr)
  \text.
\end{align*}

\section{Compacting rooted pushdown systems}
\label{sec:rpds-to-dsg}
We now turn our attention to compacting a rooted pushdown
system (defined intensionally) into a Dyck state graph (defined
extensionally).
That is, we want to find an implementation of the function $\fDSG$.
To do so, we first phrase the Dyck state graph construction as the
least fixed point of a monotonic function.
This will provide a method (albeit an inefficient one) for computing
the function $\fDSG$.
In the next section, we look at an optimized work-list driven
algorithm that avoids the inefficiencies of this version.

The function $\mkDSG : \mathbb{RPDS} \to (\mathbb{DSG} \to
\mathbb{DSG})$ generates the monotonic iteration function we need:
\begin{align*}
  \mkDSG(M) &= f\text{, where }
  \\
  M &= (\QStates,\StackAlpha,\transfunction,\qstate_0)
  \\
  f(\DSStates,\DSFrames,\DSEdges,\dsstate_0) &= (\DSStates',\DSFrames,\DSEdges',\dsstate_0) \text{, where }
  \\
  \DSStates' &= \DSStates \union \setbuild{ \dsstate' }{ 
    \dsstate \in \DSStates 
    \text{ and }  
    \dsstate 
    \mathrel{\RPDTrans_M}
    \dsstate'
  }
  \union \set{\dsstate_0}
  \\
  \DSEdges' &= \DSEdges \union \setbuild{ \dsstate \pdedge^\stackact \dsstate' }{ 
    \dsstate \in \DSStates 
    \text{ and }  
    \dsstate 
    \mathrel{\RPDTrans_M^\stackact}
    \dsstate'    
  }
  \text.
\end{align*}
Given a rooted pushdown system $M$, each application of the function
$\mkDSG(M)$ accretes new edges at the frontier of the Dyck state
graph.
Once the algorithm reaches a fixed point, the Dyck state graph is complete:
\begin{theorem} 
  $\fDSG(M) = \lfp(\mkDSG(M))$.
\end{theorem}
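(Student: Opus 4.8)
The plan is to prove the two inclusions $\lfp(\mkDSG(M)) \wt \fDSG(M)$ and $\fDSG(M) \wt \lfp(\mkDSG(M))$, where Dyck state graphs over a fixed frame set $\StackAlpha$ and fixed root $\qstate_0$ are ordered by component-wise inclusion of their node sets $\DSStates$ and edge sets $\DSEdges$. First I would record the routine prerequisites: the iteration function $f = \mkDSG(M)$ is monotone for this order (enlarging $\DSStates$ can only enlarge the candidate successor and edge sets it produces), and since a rooted pushdown system has a finite control-state set $\QStates$ and a finite stack alphabet, the underlying lattice of Dyck state graphs is finite. Hence $\lfp(f)$ exists and is reached as $f^n(\bot)$ for some finite $n$, with $\bot = (\emptyset,\StackAlpha,\emptyset,\qstate_0)$.

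For the inclusion $\lfp(f) \wt \fDSG(M)$, I would show that $\fDSG(M)$ is itself a fixed point of $f$; because $\lfp(f)$ is the \emph{least} fixed point, the inclusion then follows immediately. The key observation is that $\RPDTrans_M$ already bakes in root-reachability from $\qstate_0$: whenever $\dsstate \RPDTrans_M^\stackact \dsstate'$ there are stacks with $(\qstate_0,\vect{}) \PDTrans_M^* (\dsstate,\vec\stackchar) \PDTrans_M^\stackact (\dsstate',\vecp{\stackchar})$, so $\dsstate'$ is a root-reachable control state and $\dsstate \pdedge^\stackact \dsstate'$ is a root-reachable edge. Consequently, applying $f$ to $\fDSG(M)$ can add only nodes and edges that already lie in $\fDSG(M)$ (plus a re-insertion of $\qstate_0$, which is trivially root-reachable), giving $f(\fDSG(M)) = \fDSG(M)$.

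For the reverse inclusion $\fDSG(M) \wt \lfp(f)$, I would argue that every root-reachable node and edge lands in $\lfp(f)$ by induction on the length of a witnessing path $(\qstate_0,\vect{}) \PDTrans_M^* (\qstate,\vec\stackchar)$. The base case places $\qstate_0$ in $\lfp(f)$, since $f$ always inserts $\dsstate_0$ and $\lfp(f) = f(\lfp(f))$. For the inductive step, suppose the path reaches $(\qstate,\vec\stackchar)$ in $n$ steps and then takes one more step $(\qstate,\vec\stackchar) \PDTrans_M^\stackact (\qstate',\vecp{\stackchar})$. By the induction hypothesis $\qstate$ is a node of $\lfp(f)$, and because the $n$-step prefix witnesses root-reachability of $(\qstate,\vec\stackchar)$ we obtain $\qstate \RPDTrans_M^\stackact \qstate'$. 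Evaluating $f$ at $\lfp(f)$ therefore forces $\qstate'$ into the node set and $\qstate \pdedge^\stackact \qstate'$ into the edge set; as $\lfp(f)$ is a fixed point, both are already present.

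The main obstacle is handling the quantifier over stacks correctly when moving between configuration-level reachability ($\PDTrans_M^*$ on $\s{Configs}(M)$) and the control-state-level relation $\RPDTrans_M$ that $f$ actually uses: I must verify that the stack $\vec\stackchar$ produced along the reaching path is a legitimate witness for $\qstate \RPDTrans_M^\stackact \qstate'$ in the inductive step, and, conversely, that the existential over stacks in $\RPDTrans_M$ never manufactures a control state that fails to be genuinely root-reachable (which is what keeps $\fDSG(M)$ a fixed point rather than merely a post-fixed point). Once that correspondence between the two reachability notions is pinned down, both directions collapse to the standard finite Kleene/Knaster--Tarski fixed-point argument.
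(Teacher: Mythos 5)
Your proof is correct and takes essentially the same route as the paper's: both inclusions rest on the observation that $\RPDTrans_M$ has root-reachability built in, combined with Kleene iteration of the monotone $f = \mkDSG(M)$ over a finite lattice of Dyck state graphs. If anything, your path-length induction for $\fDSG(M) \subseteq \lfp(f)$ is slightly more careful than the paper's minimal-counterexample argument, which tacitly assumes a missing edge can be chosen whose source state already lies in $\lfp(\mkDSG(M))$---precisely the connectivity fact your induction supplies.
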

\begin{proof}
  Let $M = (\QStates,\StackAlpha,\transfunction,\qstate_0)$.
  Let $f = \mkDSG(M)$.
  Observe that $\lfp(f) = f^n(\emptyset,\StackAlpha,\emptyset,\qstate_0)$ for some $n$.
  When $N \subseteq M$, then it easy to show that $f(N) \subseteq M$.
  Hence, $\fDSG(M) \supseteq \lfp(\mkDSG(M))$.

  To show $\fDSG(M) \subseteq \lfp(\mkDSG(M))$, suppose this is not
  the case.
  Then, there must be at least one edge in $\fDSG(M)$ that is not in
  $\lfp(\mkDSG(M))$.
  Let $(\dsstate,\stackact,\dsstate')$ be one such edge, such that the state $\dsstate$ \emph{ is } in 
  $\lfp(\mkDSG(M))$.
  Let $m$ be the lowest natural number such that $\dsstate$ appears in $f^m(M)$.
  By the definition of $f$, this edge must appear in $f^{m+1}(M)$, which means it must also appear in 
  $\lfp(\mkDSG(M))$, which is a contradiction.
  Hence, $\fDSG(M) \subseteq \lfp(\mkDSG(M))$.
\end{proof}

\subsection{Complexity: Polynomial and exponential}
 
To determine the complexity of this algorithm, we ask two questions:
how many times would the algorithm invoke the iteration function in
the worst case, and how much does each invocation cost in the
worst case?
The size of the final Dyck state
graph bounds the run-time of the algorithm.
Suppose the final Dyck state graph has $m$ states.
In the worst case, the iteration function adds only a single edge each
time.
Since there are at most $2\abs{\StackAlpha}m^2 + m^2$ edges in the final graph, the maximum
number of iterations is $2\abs{\StackAlpha}m^2 + m^2$.

The cost of computing each iteration is harder to bound.
The cost of determining whether to add a push edge is proportional to
the size of the stack alphabet, while the cost of determining whether
to add an $\epsilon$-edge is constant, so the cost of determining all
new push and pop edges to add is proportional to $\abs{\StackAlpha}m +
m$.
Determining whether or not to add a pop edge is expensive.
To add the pop edge
$\triedge{\dsstate}{\stackchar_-}{\dsstate'}$, we must prove that
there exists a configuration-path to the control state $\dsstate$, in which
the character $\stackchar$ is on the top of the stack.
This reduces to a CFL-reachability query~\cite{mattmight:Melski:2000:CFL}
at each node, the cost of which is $O(\abs{\StackAlpha_\pm}^3
m^3)$~\cite{mattmight:Kodumal:2004:CFL}.

To summarize, in terms of the number of reachable control states, the
complexity of the most recent algorithm is:
\[
O((2\abs{\StackAlpha}m^2 + m^2) 
  \times
  (\abs{\StackAlpha}m + m
  + \abs{\StackAlpha_\pm}^3m^3)) 
  =  O(\abs{\StackAlpha}^4m^5)\text.
\]
While this approach is polynomial in the number of reachable
control states, it is far from efficient.
In the next section, we provide an optimized version of this
fixed-point algorithm that maintains a work-list and an
\ecg{} to avoid spurious recomputation.

% Moreover, we have carefully phrased the complexity in terms of
% ``reachable'' control states because we expect that, in practice, Dyck
% state graphs will be extremely sparse, and because, the maximum number
% of control states is exponential in the size of the input program.
% %
% After the subsequent refinement, we will be able to develop a hierarchy of
% pushdown control-flow analyses that employs widening to achieve a
% polynomial-time algorithm at its foundation.

\section{Efficiency: Work-lists and 
    \texorpdfstring{$\boldsymbol\epsilon$}{epsilon}-closure graphs}
\label{sec:ecg-worklist}
We have developed a fixed-point formulation of the Dyck state graph
construction algorithm, but found that, in each iteration, it wasted
effort by passing over all discovered states and edges, even though
most will not contribute new states or edges.
Taking a cue from graph search, we can adapt the fixed-point algorithm
with a work-list.
That is, our next algorithm will keep a work-list of new states and
edges to consider, instead of reconsidering all of them.
In each iteration, it will pull new states and edges from the work
list, insert them into the Dyck state graph and then populate the
work-list with new states and edges that have to be added as a
consequence of the recent additions.

\subsection{\texorpdfstring{$\boldsymbol\epsilon$}{epsilon}-closure graphs}
Figuring out what edges to add as a consequence of another edge
requires care, for adding an edge can have ramifications on distant
control states.
Consider, for example, adding the $\epsilon$-edge $\qstate
\pdedge^{\epsilon} \qstate'$ into the following graph:
\begin{equation*}
  \xymatrix{ 
    \qstate_0 \ar[r]^{\stackchar_+} & \qstate & \qstate' \ar[r]^{\stackchar_-} & \qstate_1
  } 
\end{equation*}
As soon this edge drops in, an $\epsilon$-edge ``implicitly'' appears
 between $\qstate_0$ and $\qstate_1$ because the net
stack change between them is empty; the resulting graph looks like:
\begin{equation*}
  \xymatrix{ 
    \qstate_0 \ar[r]^{\stackchar_+} \ar @{-->} @(u,u) [rrr]^\epsilon &
    \qstate \ar[r]^\epsilon &
    \qstate' \ar[r]^{\stackchar_-} &
    \qstate_1
  }  
\end{equation*}
where we have illustrated the implicit $\epsilon$-edge as a dotted line.

To keep track of these implicit edges, we will construct a second
graph in conjunction with the Dyck state graph: an $\epsilon$-closure
graph.
In the $\epsilon$-closure graph, every edge indicates the existence of
a no-net-stack-change path between control states.
The $\epsilon$-closure graph simplifies the task of figuring out
which states and edges are impacted by the addition of a new edge.

Formally, an \textbf{$\boldsymbol \epsilon$-closure graph}, is a pair
$G_\epsilon = (N,H)$, where $N$ is a set of states, and $H \subseteq N \times N$ is
a set of edges.
Of course, all \ecg s are reflexive: every node has a self loop.
We use the symbol $\mathbb{ECG}$ to denote the class of all \ecg s.

We have two notations for finding ancestors and descendants of a state
in an \ecg{} $G_\epsilon = (N,H)$:
\begin{align*}
  \eancestor{\dsstate} &= \setbuild{ \dsstate' }{ (\dsstate',\dsstate) \in H } 
  && \text{[ancestors]}
  \\
  \edescendent{\dsstate} &= \setbuild{ \dsstate' }{ (\dsstate,\dsstate') \in H } 
  && \text{[descendants]}
  \text.
\end{align*}

\subsection{Integrating a work-list}
Since we only want to consider new states and edges in each iteration,
we need a work-list, or in this case, two work-graphs.
A Dyck state work-graph is a pair $(\Delta S,\Delta E)$ in which the
set $\Delta S$ contains a set of states to add, and the set $\Delta E$
contains edges to be added to a Dyck state
graph.\footnote{Technically, a work-graph is not an actual graph,
  since $\Delta E \not\subseteq \Delta S \times \StackAlpha_\pm \times
  \Delta S$; a work-graph is just a set of nodes and a set of edges.}
We use $\Delta\mathbb{DSG}$ to refer to the class of all Dyck state work-graphs.

An $\epsilon$-closure work-graph is a set $\Delta H$ of new
$\epsilon$-edges.
We use $\Delta\mathbb{ECG}$ to refer to the class of all
$\epsilon$-closure work-graphs.

\subsection{A new fixed-point iteration-space}
Instead of consuming a Dyck state graph and producing a Dyck state
graph, the new fixed-point iteration function will consume and produce a Dyck state graph,
an \ecg{}, a Dyck state work-graph and an $\epsilon$-closure work graph.
Hence, the iteration space of the new algorithm is:
\begin{equation*}
  \s{IDSG} = \mathbb{DSG} \times \mathbb{ECG}
  \times \Delta\mathbb{DSG}
  \times \Delta\mathbb{ECG}
  \text.
\end{equation*}
(The \emph{I} in $\s{IDSG}$ stands for \emph{intermediate}.)

\subsection{The \texorpdfstring{$\boldsymbol\epsilon$}{epsilon}-closure 
  graph work-list algorithm}
The function $\mkDSG' : \mathbb{RPDS} \to (\s{IDSG} \to
\s{IDSG})$ generates the required iteration function (Figure~\ref{fig:mkdsg-ecg}).
\begin{figure}
%% \begin{small}
\begin{align*}
  \mkDSG'(M)
  &= 
  f\text{, where }
  \\
  M &= (\QStates,\StackAlpha,\transfunction,\qstate_0)
  \\
  f(G,G_\epsilon,\Delta G,\Delta H) &=
   (G',G'_\epsilon,\Delta G',\Delta H' -  H)\text{, where }
  \\
  (\DSStates,\DSFrames,\DSEdges,\dsstate_0)  &= G 
  \\
  (S,H)  &= G_\epsilon
  \\
  (\Delta S,\Delta E) &= \Delta G
  \\
  (\Delta E_0, \Delta H_0) &= \!\! \Union_{\dsstate \in \Delta S} \!\!  \fsprout_M(\dsstate)
  \\
  (\Delta E_1, \Delta H_1) &= 
  \!\!\!\!\!\!\!\!\!\!
  \Union_{(\dsstate,\stackchar_+,\dsstate') \in \Delta E} 
  \!\!\!\!\!\!\!\!\!\!
  \faddpush_M(G,G_\epsilon) (\dsstate,\stackchar_+,\dsstate') 
  \\
  (\Delta E_2, \Delta H_2) &=
  \!\!\!\!\!\!\!\!\!\!
  \Union_{(\dsstate,\stackchar_-,\dsstate') \in \Delta E} 
  \!\!\!\!\!\!\!\!\!\!
  \faddpop_M(G,G_\epsilon) (\dsstate,\stackchar_-,\dsstate')  
  \\
  (\Delta E_3, \Delta H_3) &=
  \!\!\!\!\!\!\!\!
  \Union_{(\dsstate,\epsilon,\dsstate') \in \Delta E} 
  \!\!\!\!\!\!\!\!
  \faddempty_M(G,G_\epsilon) (\dsstate,\dsstate')   
  \\
  (\Delta E_4, \Delta H_4) &=
  \!\!\!\!\!\!
  \Union_{(\dsstate,\dsstate') \in \Delta H} 
  \!\!\!\!\!\!\!
  \faddempty_M(G,G_\epsilon) (\dsstate,\dsstate')   
  \\
  S' &= \DSStates \union \Delta S
  \\
  E' &= \DSEdges \union \Delta E
  \\
  H' &= H \union \Delta H
  \\
  \Delta E' &= \Delta E_0 \union \Delta E_1 \union \Delta E_2 \union \Delta E_3 \union \Delta E_4
  \\
  \Delta S' &= \setbuild{ \dsstate' }{ (\dsstate,\stackact,\dsstate') \in \Delta E' }
  \\
  \Delta H' &= \Delta H_0 \union \Delta H_1 \union \Delta H_2 \union \Delta H_3 \union \Delta H_4
  \\
  G' &= (\DSStates \union \Delta S,\DSFrames,E',\qstate_0)
  \\
  G_\epsilon' &= (S',H')
  \\
  \Delta G' &= (\Delta S' - S', \Delta E ' - E')
  \text.
\end{align*}
\caption{The fixed point of the function $\mkDSG'(M)$ contains
  the Dyck state graph of the rooted pushdown system $M$.}
\label{fig:mkdsg-ecg}
%% \end{small}
\end{figure}
Please note that we implicitly distribute union across tuples:
\begin{equation*}
  (X,Y) \union (X',Y') = 
  (X \union X, Y \union Y')
  \text.
\end{equation*}
The functions $\fsprout$, $\faddpush$, $\faddpop$, $\faddempty$
calculate the additional the Dyck state graph edges and \ecg{} edges
(potentially) introduced by a new state or edge.

\paragraph{Sprouting}
Whenever a new state gets added to the Dyck state graph, 
the algorithm must check whether that state has any new edges to contribute.
Both push edges and $\epsilon$-edges do not depend on the current
stack, so any such edges for a state in the pushdown system's
transition function belong in the Dyck state graph. 
The sprout function:
\begin{equation*}
  \fsprout_{(\QStates,\StackAlpha,\transfunction)} :
\QStates \to (\Pow{\transfunction} \times \Pow{\QStates \times \QStates})
\text,
\end{equation*}
checks whether a new state could produce any new push edges or no-change edges.
We can represent its behavior diagrammatically:
\begin{equation*}
  \xymatrix{
    &
    *+[F-:<1pt>]{\dsstate} \ar @{..>} [dl]^{\epsilon}_\transfunction \ar @{..>} [dr]^{\stackchar_+}_\transfunction  & 
    \\
     \qstate' && \qstate''
   }
\end{equation*}
which means if adding control state $\dsstate$:
\begin{itemize}
\item[] add edge $\triedge{\dsstate}{\epsilon}{\qstate'}$ if it exists in $\delta$, and
\item[] add edge $\triedge{\dsstate}{\stackchar_+}{\qstate''}$ if it exists in  $\delta$.
\end{itemize}
Formally:
\begin{align*}
  \fsprout_{(\QStates,\StackAlpha,\transfunction)} (\dsstate) &= (\Delta E, \Delta H)\text{, where }
\end{align*}
\begin{align*}
  \Delta E &= \setbuild{ \triedge{\dsstate}{\epsilon}{\qstate} }{ \triedge{\dsstate}{\epsilon}{\qstate} \in \transfunction }
  \mathrel{\union}
   \setbuild{ \triedge{\dsstate}{\stackchar_+}{\qstate} }{ \triedge{\dsstate}{\stackchar_+}{\qstate} \in \transfunction }
  \\
  \Delta H &= \setbuild{ \biedge{\dsstate}{\qstate} }{ \triedge{\dsstate}{\epsilon}{\qstate} \in \transfunction } 
  \text.
\end{align*}

\paragraph{Considering the consequences of a new push edge}
Once our algorithm adds a new push edge to a Dyck state graph, there
is a chance that it will enable new pop edges for the same stack frame
somewhere downstream.
If and when it does enable pops, it will also add new edges to the
\ecg{}.
The $\faddpush$ function:
\begin{equation*}
  \faddpush_{(\QStates,\StackAlpha,\transfunction)} :
\mathbb{DSG} \times \mathbb{ECG} \to \transfunction \to (\Pow{\transfunction} \times \Pow{\QStates \times \QStates})
\text,
\end{equation*}
checks for $\epsilon$-reachable states that could produce a pop.
We can represent this action diagrammatically:
\begin{equation*}
  \xymatrix{
    *+[F-:<1pt>]{\dsstate} \ar [r]^{\stackchar_+} \ar @(d,d) @{..>} [rrr]^\epsilon_\epsilon
    & 
    *+[F-:<1pt>]{\qstate} \ar [r]^{\epsilon}_\epsilon
    & 
    {\qstate'} \ar @{..>} [r]^{\stackchar_-}_\transfunction
    & \qstate''
    \\
    \\
 }
\end{equation*}
which means if adding push-edge $\triedge{\dsstate}{\stackchar_+}{\qstate}$:
\begin{itemize}
  \item[] if pop-edge $\triedge{\qstate'}{\stackchar_-}{\qstate''}$ is in $\delta$, then
  \item[] \hspace{1.5em} add edge $\triedge{\qstate'}{\stackchar_-}{\qstate''}$, and
  \item[] \hspace{1.5em} add $\epsilon$-edge $\biedge{\dsstate}{\qstate''}$.
\end{itemize}
Formally:
\begin{align*}
  &\faddpush_{(\QStates,\StackAlpha,\transfunction)} (G,G_\epsilon) (\triedge{\dsstate}{\stackchar_+}{\qstate}) =
  (\Delta E, \Delta H)\text{, where }
  \\
  &\;\;\;\Delta E = \setbuild{ \triedge{\qstate'}{\stackchar_-}{\qstate''} }{
    \qstate' \in \edescendent{\qstate}
    \text{ and }
    \triedge{\qstate'}{\stackchar_-}{\qstate''} \in \transfunction }
  \\
  &\;\;\; \Delta H = \setbuild{ \biedge{\dsstate}{\qstate''} }{ 
    \qstate' \in \edescendent{\qstate}
    \text{ and }
    \triedge{\qstate'}{\stackchar_-}{\qstate''} \in \transfunction }
  \text.
\end{align*}

\paragraph{Considering the consequences of a new pop edge}
%
% TODO: Consider eliminating \Delta E return.
%
Once the algorithm adds a new pop edge to a Dyck state graph, it will create 
at least one new \ecg{} edge and possibly more by matching up with
upstream pushes.
The $\faddpop$ function:
\begin{equation*}
  \faddpop_{(\QStates,\StackAlpha,\transfunction)} :
\mathbb{DSG} \times \mathbb{ECG} \to \transfunction \to (\Pow{\transfunction} \times \Pow{\QStates \times \QStates})
\text,
\end{equation*}
checks for $\epsilon$-reachable push-edges that could match this pop-edge.
We can represent this action diagrammatically:
\begin{equation*}
  \xymatrix{
    {\dsstate} \ar [r]^{\stackchar_+} \ar @(d,d) @{..>} [rrr]^\epsilon_\epsilon
    & 
    {\dsstate'} \ar [r]^{\epsilon}_\epsilon
    & 
    *+[F-:<1pt>]{\dsstate''} \ar [r]^{\stackchar_-}_\transfunction
    & 
    *+[F-:<1pt>]{\qstate}
    \\
    \\
 }
\end{equation*}
which means if adding pop-edge $\triedge{\dsstate''}{\stackchar_-}{\qstate}$:
\begin{itemize}
 \item[] if push-edge $\triedge{\dsstate}{\stackchar_+}{\dsstate'}$ is already in the Dyck state graph, then
 \item[] \hspace{1.5em} add $\epsilon$-edge $\biedge{\dsstate}{\qstate}$.
\end{itemize}
Formally:
\begin{align*}
  &\faddpop_{(\QStates,\StackAlpha,\transfunction)} (G,G_\epsilon) (\triedge{\dsstate''}{\stackchar_-}{\qstate}) =
  (\Delta E, \Delta H)\text{, where }
\\
&  \Delta E = \emptyset\mbox{ and }
  \Delta H = \setbuild{ \biedge{\dsstate}{\qstate} }{ 
    \dsstate' \in \eancestor{\dsstate''}
    \text{ and }
    \triedge{\dsstate}{\stackchar_+}{\dsstate'} \in G }
  \text.
\end{align*}

\paragraph{Considering the consequences of a new $\boldsymbol\epsilon$-edge}
%
% TODO: Consider eliminating \Delta E return.
%
Once the algorithm adds a new \ecg{} edge, it may transitively
have to add more \ecg{} edges, and it may connect an old push to
(perhaps newly enabled) pop edges.
The $\faddempty$ function:
\[
\begin{array}{l}
  \faddempty_{(\QStates,\StackAlpha,\transfunction)} :
\\
\qquad
\mathbb{DSG} \times \mathbb{ECG} \to (\QState \times \QState) \to (\Pow{\transfunction} \times \Pow{\QStates \times \QStates})
\text,
\end{array}
\]
checks for newly enabled pops and \ecg{} edges:
Once again, we can represent this action diagrammatically:
\begin{equation*}
  \xymatrix{
    \\
   {\dsstate} \ar [r]^{\stackchar_+} 
   % \ar @<-10pt> @(d,d) @{..>} [rrrrr]^\epsilon_\epsilon
   \ar @{..>} `d[ddr] `r[rrrrr]^\epsilon_\epsilon [rrrrr]
   &
   {\dsstate'} \ar [r]^\epsilon_\epsilon \ar @(u,u) @{..>} [rr]^\epsilon_\epsilon
   \ar @(d,d) @{..>} [rrr]^\epsilon_\epsilon
   & 
   *+[F-:<1pt>]{\dsstate''} \ar [r]^{\epsilon} \ar @(u,u) @{..>} [rr]^\epsilon_\epsilon
   &
   *+[F-:<1pt>]{\dsstate'''} \ar [r]^\epsilon_\epsilon
   & 
   {\dsstate''''} \ar @{..>} [r]^{\stackchar_-}_\transfunction
   & 
   {\qstate}
   \\
   & & & & \\
   & & & & \\
}
\end{equation*}
which means if adding $\epsilon$-edge $\biedge{\dsstate''}{\dsstate'''}$:
\begin{itemize}
  \item[] if pop-edge $\triedge{\dsstate''''}{\stackchar_-}{\qstate}$ is in $\transition$, then
  \item[] \hspace{2em} add $\epsilon$-edge $\biedge{\dsstate}{\qstate}$; and
  \item[] \hspace{2em} add edge $\triedge{\dsstate''''}{\stackchar_-}{\qstate}$; 
  \item[] add $\epsilon$-edges $\biedge{\dsstate'}{\dsstate'''}$,
    $\biedge{\dsstate''}{\dsstate''''}$, and
    $\biedge{\dsstate'}{\dsstate''''}$.
\end{itemize}
Formally:
\begin{align*}
  &\faddempty_{(\QStates,\StackAlpha,\transfunction)} (G,G_\epsilon) (\biedge{\dsstate''}{\dsstate'''})
  = (\Delta E, \Delta H)\text{, where }
  \\
  &\;\;\;\;\Delta E = \big\{ \triedge{\dsstate''''}{\stackchar_-}{\qstate} :
    \dsstate' \in \eancestor{\dsstate''}
    \text{ and }
    \dsstate'''' \in \edescendent{\dsstate'''}
    \text{ and } \\
    & \hspace{9.5em} \triedge{\dsstate}{\stackchar_+}{\dsstate'} \in G \big\}
  \\
  &\;\;\;\;\Delta H = \big\{ \biedge{\dsstate}{\qstate} :
    \dsstate' \in \eancestor{\dsstate''}
    \text{ and }
    \dsstate'''' \in \edescendent{\dsstate'''}
    \text{ and } \\
    & \hspace{8.5em} \triedge{\dsstate}{\stackchar_+}{\dsstate'} \in G \big\}
  \\
  & \hspace{3em} \union \setbuild{ \biedge{\dsstate'}{\dsstate'''}  }{
    \dsstate' \in \eancestor{\dsstate''}
  }
  \\
  & \hspace{3em} \union \setbuild{ \biedge{\dsstate''}{\dsstate''''}  }{
    \dsstate'''' \in \edescendent{\dsstate'''}
   }
  \\
  & \hspace{3em} \union \setbuild{ \biedge{\dsstate'}{\dsstate''''}  }{
    \dsstate' \in \eancestor{\dsstate''}
    \text{ and }
    \dsstate'''' \in \edescendent{\dsstate'''}
  }
  \text.
\end{align*}

\subsection{Termination and correctness}
Because the iteration function is no longer monotonic, we have to
prove that a fixed point exists.
It is trivial to show that the Dyck state graph component of the
iteration-space ascends monotonically with each application; that is:
\begin{lemma}
  Given $M \in \mathbb{RPDS}, G \in \mathbb{DSG}$ such that $G \subseteq M$, 
  if $\mkDSG'(M)(G,G_\epsilon,\Delta G) = (G',G_\epsilon',\Delta G')$, then
  $G \subseteq G'$.
\end{lemma}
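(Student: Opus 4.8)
The plan is to unfold the definition of $\mkDSG'(M)$ from Figure~\ref{fig:mkdsg-ecg} and simply read off the Dyck-state-graph component of the output. Writing $G = (\DSStates,\DSFrames,\DSEdges,\dsstate_0)$ and letting $(\Delta S,\Delta E) = \Delta G$ be the incoming work-graph, the figure stipulates that the first component of the result is $G' = (\DSStates \union \Delta S,\; \DSFrames,\; E',\; \qstate_0)$ with $E' = \DSEdges \union \Delta E$. Thus the node set and the edge set of $G'$ are each obtained from those of $G$ by a union, while the frame alphabet $\DSFrames$ and the root are carried over unchanged. Since the hypothesis $G \subseteq M$ forces $G$ to share its root with $M$, i.e. $\dsstate_0 = \qstate_0$, the roots of $G$ and $G'$ agree as well.

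First I would pin down what containment means here. Because $\mathbb{DSG} \subseteq \mathbb{RPDS}$, two such graphs are ordered componentwise: $G \subseteq G'$ holds exactly when they carry the same frame alphabet and root and satisfy $\DSStates \subseteq \DSStates'$ together with $\DSEdges \subseteq \DSEdges'$. Under this reading the claim decomposes into three elementary observations, each immediate from the union construction above: $\DSStates \subseteq \DSStates \union \Delta S$, $\DSEdges \subseteq \DSEdges \union \Delta E$, and equality of the remaining two components. No property of $\Delta S$ or $\Delta E$ is needed beyond their being sets, since union is inflationary.

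The only step demanding any care is bookkeeping rather than mathematics: one must confirm that the output tuple's graph component really is $(\DSStates \union \Delta S,\DSFrames,\DSEdges \union \Delta E,\qstate_0)$ and not one of the many intermediate objects the figure names, so that nothing in the construction can make a component of $G$ shrink. Once that identification is made, the lemma follows with no appeal to the auxiliary functions $\fsprout$, $\faddpush$, $\faddpop$, or $\faddempty$; those only feed the work-graph and $\epsilon$-closure components, which are irrelevant to the monotonicity of $G$ itself. This is precisely why the authors regard the lemma as trivial: the $G$-component is defined purely by accretion, and set union never removes elements.
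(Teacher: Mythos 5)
Your proof is correct and matches the paper's reasoning exactly: the paper offers no explicit proof, dismissing the lemma as trivial precisely because the $G$-component of $\mkDSG'(M)$'s output is $(\DSStates \union \Delta S, \DSFrames, \DSEdges \union \Delta E, \qstate_0)$, built purely by union over $G$'s components. Your additional bookkeeping---identifying $\dsstate_0$ with $\qstate_0$ via $G \subseteq M$ and noting that the auxiliary functions touch only the work-graph and \ecg{} components---is a sound and slightly more careful rendering of the same observation.
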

Since the size of the Dyck state graph is bounded by the original
pushdown system $M$, the Dyck state graph will eventually reach a
fixed point.
Once the Dyck state graph reaches a fixed point, both work-graphs/sets
will be empty, and the \ecg{} will also stabilize.
We can also show that this algorithm is correct:
\begin{theorem}
  $\lfp(\mkDSG'(M)) = (\fDSG(M),G_\epsilon,(\emptyset,\emptyset),\emptyset)$.
\end{theorem}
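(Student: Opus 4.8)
The plan is to reduce this to the already-established correspondence $\fDSG(M) = \lfp(\mkDSG(M))$ from the earlier theorem, by showing that the work-list algorithm $\mkDSG'$ computes exactly the same Dyck state graph as the naive algorithm $\mkDSG$, while the remaining three components of the iteration space settle into their claimed values. Concretely, I would split the argument into three parts: (i) the iteration terminates and, at its fixed point, both work-graph components $\Delta G$ and $\Delta H - H$ are empty; (ii) the Dyck state graph component $G$ at the fixed point is \emph{sound}, i.e.\ $G \subseteq \fDSG(M)$; and (iii) it is \emph{complete}, i.e.\ $\fDSG(M) \subseteq G$. Throughout, the linchpin is an invariant characterizing the $\epsilon$-closure graph $G_\epsilon = (S,H)$: at every iteration, a pair $\biedge{\dsstate}{\dsstate'}$ lies in $H$ if and only if there is a net-zero-stack-change (balanced) path from $\dsstate$ to $\dsstate'$ using only edges currently present in $G$. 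Everything else follows once this invariant is in hand.

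For part (i), the monotonicity lemma already gives $G \subseteq G'$ on each step, and since every $G$ produced is a sub-graph of the bounded system $M$ (provable by the same argument as in the naive case, $N \subseteq M \Rightarrow f(N) \subseteq M$), the $\mathbb{DSG}$ component cannot ascend forever, so a fixed point $f^n(\emptyset,\StackAlpha,\emptyset,\qstate_0)$ exists. At that fixed point no new DSG edges are produced, which by inspection of the definitions of $\Delta S'$, $\Delta E'$ and $\Delta H'$ forces $\Delta G' = (\emptyset,\emptyset)$ and $\Delta H' - H = \emptyset$, delivering the last two components of the claimed tuple and pinning down $G_\epsilon$ as the stabilized $\epsilon$-closure graph.

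For soundness (ii) I would argue by induction on the number of iterations that every edge placed into $G$ corresponds to a genuine root-reachable transition $\dsstate \RPDTrans_M^{\stackact} \dsstate'$. The $\fsprout$ contributions are immediate, since push- and $\epsilon$-edges do not depend on the stack. For $\faddpush$ and $\faddpop$, the $\epsilon$-closure invariant guarantees that the intermediary states consulted via $\edescendent{\qstate}$ and $\eancestor{\dsstate''}$ are connected by real balanced paths, so a matched push/pop pair witnesses an actual configuration in which the popped frame sits atop a reachable stack; hence the resulting edges and the bridging $\epsilon$-edges are legitimate. Completeness (iii) is the reverse induction: I would induct on the length of a shortest stack-action path witnessing reachability of a given edge of $\fDSG(M)$ and show that each such edge is eventually emitted. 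The crucial case is a balanced excursion whose outermost push is matched by a later pop; the invariant ensures the corresponding $\epsilon$-edge is present, and the $\faddempty$ rule then fires the matching pop edge and propagates the new balanced connection.

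I expect the main obstacle to be establishing and \emph{preserving} the $\epsilon$-closure invariant, and in particular verifying it for $\faddempty$. Adding a single $\epsilon$-edge $\biedge{\dsstate''}{\dsstate'''}$ can compose on the left with any $\epsilon$-ancestor and on the right with any $\epsilon$-descendant, and can simultaneously unlock distant pop matches against previously recorded pushes; the rule must therefore generate exactly the set of newly balanced pairs and no more. I would prove the ``no more'' (soundness of the update) and ``no fewer'' (completeness of the update) halves separately, checking that the four contributions in $\faddempty$---the pop-enabling $\Delta E$, the push-matched $\biedge{\dsstate}{\qstate}$, and the three transitive closures $\biedge{\dsstate'}{\dsstate'''}$, $\biedge{\dsstate''}{\dsstate''''}$ and $\biedge{\dsstate'}{\dsstate''''}$---collectively capture every net-zero path created by the new edge. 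Because $\faddempty$ also feeds its new $\epsilon$-edges back through the $\Delta H$ work-graph, I must further argue that this feedback saturates rather than diverges, which again rests on the boundedness of $G$ from part (i).
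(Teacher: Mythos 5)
Your proposal is sound, but it is worth noting that you have written considerably more proof than the paper does: the paper's entire argument for this theorem is the single sentence ``The proof is similar in structure to the previous one,'' i.e., it defers to the double-containment-plus-minimal-counterexample skeleton used for $\fDSG(M) = \lfp(\mkDSG(M))$. Your decomposition into (i) termination with empty work-graphs at the fixed point, (ii) soundness, and (iii) completeness matches that skeleton, so in outline you and the paper agree. What you add---and what the paper silently elides---is the actual content that makes the work-list version nontrivial: the invariant characterizing $H$ as balanced-path (net-empty stack action) reachability over the edges of $G$, and the case analysis showing that $\fsprout$, $\faddpush$, $\faddpop$, and $\faddempty$ jointly preserve it regardless of the \emph{order} in which pushes, pops, and $\epsilon$-connections are discovered ($\faddpush$ looks forward through $\epsilon$-descendants, $\faddpop$ looks backward through $\epsilon$-ancestors, and $\faddempty$ bridges the remaining case). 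This is precisely the right lemma; the naive algorithm's proof does not need it because each iteration re-examines the whole graph, whereas the work-list algorithm only reacts to deltas, so completeness genuinely hinges on your saturation argument.

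One small repair: your invariant as stated (``$\biedge{\dsstate}{\dsstate'} \in H$ iff there is a balanced path in $G$'') is too strong to hold at every intermediate iterate, since $H$ may lag behind $G$ by the pending work in $\Delta H$ and $\Delta E$. The maintainable form is asymmetric: every pair in $H$ is witnessed by a balanced path in $G$ (soundness direction holds at all times), while every balanced path over edges of $G$ is reflected in $H \cup \Delta H$-closure (completeness direction holds only up to pending work, with the ``iff'' recovered exactly when $\Delta G$ and $\Delta H$ empty out at the fixed point). With that adjustment, and with your part~(i) argument pinning down the empty work-graphs---which follows because $E' = \DSEdges \union \Delta E$ forces any nonempty $\Delta E$ to change $G$, and $\Delta G' = (\Delta S' - S', \Delta E' - E')$ subtracts everything already absorbed---your proof goes through.
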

\begin{proof}
  The proof is similar in structure to the previous one.
\end{proof}

\subsection{Complexity: Still exponential, but more efficient}

As with the previous algorithm, to determine the complexity of this
algorithm, we ask two questions: how many times would the algorithm
invoke the iteration function in the worst case, and how much
does each invocation cost in the worst case?
The run-time of the algorithm is bounded by the size of the final Dyck state
graph plus the size of the \ecg.
Suppose the final Dyck state graph has $m$ states.
In the worst case, the iteration function adds only a single edge each
time.
There are at most $2\abs{\StackAlpha}m^2 + m^2$ edges in the Dyck
state graph and at most $m^2$ edges in the \ecg{}, which bounds the
number of iterations.

Next, we must reason about the worst-case cost of adding an edge: how
many edges might an individual iteration consider?
In the worst case, the algorithm will consider every edge in every
iteration, leading to an asymptotic time-complexity of:
\begin{equation*}
  O((2\abs{\StackAlpha}m^2 + 2m^2)^2) = 
  O(\abs{\StackAlpha}^2m^4)
  \text.
\end{equation*}
While still high, this is a an improvement upon the previous
algorithm.  
%
% In practice, Dyck state graphs are likely to be sparse, with only
% a few edges considered at each iteration, so that the average-case
% complexity is closer to $O(\abs{\StackAlpha}m^2)$.
%
For sparse Dyck state graphs, this is a reasonable algorithm.

\section{Polynomial-time complexity from widening}
\label{sec:widening}

In the previous section, we developed a more efficient fixed-point
algorithm for computing a Dyck state graph.
Even with the core improvements we made, the algorithm remained
exponential in the worst case, owing to the fact that there could be
an exponential number of reachable control states.
When an abstract interpretation is intolerably complex, the standard
approach for reducing complexity and accelerating convergence is
widening~\cite{mattmight:Cousot:1977:AI}.
(Of course, widening techniques trade away some precision to gain this
speed.)
It turns out that the small-step variants of finite-state CFAs are
exponential without some sort of widening as well.

To achieve polynomial time complexity for pushdown control-flow
analysis requires the same two steps as the classical case: (1)
widening the abstract interpretation to use a global,
``single-threaded'' store and (2) selecting a monovariant allocation
function to collapse the abstract configuration-space.
Widening eliminates a source of exponentiality in the size of the
store; monovariance eliminates a source of exponentiality from
environments.
In this section, we redevelop the pushdown control-flow analysis
framework with a single-threaded store and calculate
its complexity.

\subsection{Step 1: Refactor the concrete semantics}
First, consider defining the reachable states of the concrete
semantics using fixed points.
That is, let the system-space of the evaluation function be
sets of configurations:
\begin{equation*}
  \system \in \s{System} = 
  \Pow{\s{Conf}} =
  \Pow{\syn{Exp} \times \s{Env} \times \s{Store} \times \s{Kont}}
  \text.
\end{equation*}
We can redefine the concrete evaluation function:
\begin{align*}
  \Eval(\expr) &= \lfp(\tf_\expr)\text{, where }
  \tf_\expr: \s{System} \to \s{System}
  \text{ and }
  \\
  \tf_\expr(\system) &= \set{\Inject(\expr)} \union \setbuild{ \conf' }{ \conf \in \system \text{ and } \conf \To \conf' }
  \text.
\end{align*}

\subsection{Step 2: Refactor the abstract semantics}
We can take the same approach with the abstract evaluation function,
first redefining the abstract system-space:
\begin{align*}
  \asystem  \in \sa{System} &= 
  \Pow{\sa{Conf}} 
\\
&=
  \Pow{\syn{Exp} \times \sa{Env} \times \sa{Store} \times \sa{Kont}}
  \text,
\end{align*}
and then the abstract evaluation function:
\begin{align*}
  \aEval(\expr) &= \lfp(\atf_\expr)\text{, where } \atf_\expr : \sa{System} \to \sa{System} \text{ and }
  \\
  \atf_\expr(\asystem) &= \set{\aInject(\expr)} \union \setbuild{ \aconf' }
  { \aconf \in \asystem \text{ and } \aconf \aTo \aconf' }
  \text.
\end{align*}
What we'd like to do is shrink the abstract system-space with a
refactoring that corresponds to a widening.

\subsection{Step 3: Single-thread the abstract store}
We can approximate a set of abstract stores
$\set{\astore_1,\ldots,\astore_n}$ with 
the least-upper-bound of those stores: $\astore_1 \join \cdots
\join \astore_n$.
We can exploit this by creating a new abstract system space in which
the store is factored out of every configuration.
Thus, the system-space contains a set of \emph{partial configurations}
and a single global store:
\begin{align*}
  \sa{System}' &= \Pow{\sa{PConf}} \times \sa{Store}
  \\
  \apconf \in \sa{PConf} &= \syn{Exp} \times \sa{Env} \times \sa{Kont}
  \text.
\end{align*}
We can factor the store out of the abstract transition relation as well, so that
$(\afTo^\astore) \subseteq \sa{PConf} \times (\sa{PConf} \times \sa{Store})$:
\begin{align*}
  (\expr, \aenv, \acont) \mathrel{\afTo^{\astore}} ((\expr',\aenv',\acont'), \astore') 
  \text{ iff }
(\expr, \aenv, \astore, \acont) \aTo (\expr',\aenv', \astore', \acont') 
\text,
\end{align*}
which gives us a new iteration function,
$\atf'_\expr : \sa{System}' \to \sa{System}'$,
\begin{align*}
  \atf'_\expr(\hat P,\astore) &= (\hat P',\astore')\text{, where } 
  \\
  \hat P' &= \setbuild{ \apconf' }{ \apconf \mathrel{\afTo^\astore} (\apconf',\astore'') } \union \set{ \apconf_0 } 
  \\
  \astore' &= \bigjoin \setbuild{ \astore'' }{ \apconf \mathrel{\afTo^\astore} (\apconf',\astore'') }
  \\
  (\apconf_0,\vect{}) &= \aInject(\expr)
  \text.
\end{align*}

\subsection{Step 4: Dyck state control-flow graphs}
Following the earlier Dyck state graph reformulation of the
pushdown system, we can reformulate the set of partial
configurations as a \emph{Dyck state control-flow graph}.
A \defterm{Dyck state control-flow graph} is a frame-action-labeled
graph over partial control states, and a \defterm{partial control
  state} is an expression paired with an environment:
\begin{align*}
  \sa{System}'' &= \sa{DSCFG} \times \sa{Store}
  \\
  \sa{DSCFG} &= \PowSm{\sa{PState}} \times \PowSm{\sa{PState} \times \sa{Frame}_\pm \times \sa{PState}}
  \\
  \apstate \in \sa{PState} &= \syn{Exp} \times \sa{Env}
  \text.
\end{align*}
In a Dyck state control-flow graph, the partial control states are
partial configurations which have dropped the continuation component;
the continuations are encoded as paths through the graph.

If we wanted to do so, we could define a new monotonic iteration function
analogous to the simple fixed-point formulation of
Section~\ref{sec:rpds-to-dsg}:
\begin{equation*}
  \atf_\expr : \sa{System}'' \to \sa{System}''
  \text,
\end{equation*}
again using CFL-reachability to add pop edges at each step.

\paragraph{A preliminary analysis of complexity}
Even without defining the system-space iteration function, we can ask,
\emph{How many iterations will it take to reach a fixed point in the worst
case?}
This question is really asking, \emph{How many edges can we add?}
And, \emph{How many entries are there in the store?}
Summing these together, we arrive at the worst-case number of
iterations:
\begin{align*}
  \overbrace{\abs{\sa{PState}}
  \times
  \abs{\sa{Frame}_\pm}
  \times
  \abs{\sa{PState}}}^{\text{DSCFG edges}}
  +
  \overbrace{
    \abs{\sa{Addr}}
    \times
    \abs{\sa{Clo}}
  }^{\text{store entries}}
  \text.
\end{align*}
With a monovariant allocation scheme that eliminates abstract environments, the number of iterations
ultimately reduces to:
\begin{align*}
  \abs{\syn{Exp}}
  \times (2 \abs{\sa{\syn{Var}}} + 1)
  \times \abs{\syn{Exp}}
  +
  \abs{\syn{Var}}
  \times \abs{\syn{Lam}}
  \text,
\end{align*}
which means that, in the worst case, the algorithm makes a cubic
number of iterations with respect to the size of the input
program.\footnote{ In computing the number of frames, we note that in
  every continuation, the variable and the expression uniquely
  determine each other based on the let-expression from which they
  both came.
  As a result, the number of abstract frames available in a
  monovariant analysis is bounded by both the number of variables and
  the number of expressions, \ie, $\abs{\sa{Frame}} =
  \abs{\syn{Var}}$.}

The worst-case cost of the each iteration would be dominated by a
CFL-reachability calculation, which, in the worst case, must consider
every state and every edge:
\begin{equation*}
  O(\abs{\syn{Var}}^3 \times \abs{\syn{Exp}}^3)
  \text.
\end{equation*}
Thus, each iteration takes $O(n^{6})$ and there are a maximum of $O(n^{3})$ iterations, where $n$ is the size of the program.
So,total complexity would be $O(n^{9})$ for a monovariant
pushdown control-flow analysis with this scheme, where $n$ is again the size of the program.
Although this algorithm is polynomial-time, we can do better.

\subsection{Step 5: Reintroduce 
  \texorpdfstring{$\boldsymbol\epsilon$}{epsilon}-closure graphs}
Replicating the evolution from Section~\ref{sec:ecg-worklist} for this
store-widened analysis, we arrive at a more efficient polynomial-time
analysis.
An \ecg{} in this setting is a set of pairs of store-less,
continuation-less partial states:
\begin{align*}
  \sa{ECG} &= \Pow{\sa{PState} \times \sa{PState}}
  \text.
\end{align*}
Then, we can set the system space to include \ecg s:
\begin{align*}
  \sa{System}''' &= \sa{DSG} \times \sa{ECG} \times \sa{Store}\text.
\end{align*}

Before we redefine the iteration function, we need another factored
transition relation.
The stack- and action-factored transition relation
$(\apTo^{\astore}_{\stackact}) \subseteq \sa{PState} \times
\sa{PState} \times \s{Store}$ determines if a transition is possible
under the specified store and stack-action:
\begin{align*}
  \pdcfato{(\expr,\aenv)}{\astore}{\aphrame_+}{((\expr',\aenv'),\astore')}
  & \text{ iff }
  (\expr,\aenv,\astore,\acont) 
  \aTo
  (\expr',\aenv',\astore',\aphrame : \acont')
  \\
  \pdcfato{(\expr,\aenv)}{\astore}{\aphrame_-}{((\expr',\aenv'),\astore')}
  & \text{ iff }
  (\expr,\aenv,\astore,\aphrame : \acont) 
  \aTo
  (\expr',\aenv',\astore', \acont')
  \\
  \pdcfato{(\expr,\aenv)}{\astore}{\epsilon}{((\expr',\aenv'),\astore')}
  & \text{ iff }
  (\expr,\aenv,\astore,\acont) 
  \aTo
  (\expr',\aenv',\astore', \acont')
  \text.
\end{align*}

Now, we can redefine the iteration function (Figure~\ref{fig:widen-trans}).

\begin{figure}
%% \begin{small}
\begin{align*}
  \atf((\hat P,\hat E), \hat H, \astore) &= ((\hat P',\hat E'), \hat H', \astore'')
  \text{, where }
\end{align*}
\begin{align*}
  \hat T_+ &= \setbuild{ (\triedge{\apstate}{\aphrame_+}{\apstate'},\astore') }{
    \pdcfato{\apstate}{\astore}{\aphrame_+}{(\apstate',\astore')}
  }
  \\
  \hat T_\epsilon &= \setbuild{ (\triedge{\apstate}{\epsilon}{\apstate'},\astore') }{
    \pdcfato{\apstate}{\astore}{\epsilon}{(\apstate',\astore')} }
  \\
 \hat T_- &= 
  \big\{
    (\triedge{\apstate''}{\aphrame_-}{\apstate'''},\astore') 
  :
    \pdcfato{\apstate''}{\astore}{\aphrame_-}{(\apstate''',\astore')} \text{ and }  
    \\
    &\hspace{9.5em}
    \triedge{\apstate}{\aphrame_+}{\apstate'} \in \hat E \text{ and }
    \\
    &\hspace{9.5em}
    \biedge{\apstate'}{\apstate''} \in \hat H
    \big\}
  \\
  \hat T' &= {\hat T_+} \union {\hat T_\epsilon} \union {\hat T_-}
  \\
  \hat E' &= \setbuild{ \hat e }{ (\hat e,\_) \in \hat T '}
  \\
  \astore'' &= \bigjoin \setbuild{ \astore' }{ (\_, \astore') \in \hat T'}
  \\
  \hat H_{\epsilon} &= \setbuild{ \biedge{\apstate}{\apstate''} }{
    \biedge{\apstate}{\apstate'} \in \hat H \text{ and } 
      \biedge{\apstate'}{\apstate''} \in \hat H
  }
  \\
  \hat H_{{+}{-}} &= \big\{
  \biedge{\apstate}{\apstate'''}
    : 
    \triedge{\apstate}{\aphrame_+}{\apstate'} \in \hat E
    \text{ and }
    \biedge{\apstate'}{\apstate''} \in \hat H 
    \\
    & \hspace{6.1em} \text{ and } 
    \triedge{\apstate''}{\aphrame_-}{\apstate'''} \in \hat E
    \big\}
  \\
  \hat H' &=   \hat H_{\epsilon} \union   \hat H_{{+}{-}} 
  \\
  \hat P' &= \hat P \union \setbuild{ \apstate' }{ \triedge{\apstate}{\stackact}{\apstate'} } 
  \text.
\end{align*}
\caption{An \ecg{}-powered iteration function for pushdown control-flow analysis with a single-threaded store.}
\label{fig:widen-trans}
%% \end{small}
\end{figure}

\begin{theorem}
Pushdown 0CFA can be computed in $O(n^6)$-time, where $n$ is the
size of the program.
\end{theorem}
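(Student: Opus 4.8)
The plan is to read Theorem~\ref{thm:n6} as an \emph{instantiation} of the complexity accounting already carried out for the $\epsilon$-closure work-list algorithm in Section~\ref{sec:ecg-worklist}, now run against the store-widened iteration function $\atf$ of Figure~\ref{fig:widen-trans} and measured directly in the program size $n$, for which $\abs{\syn{Exp}} = O(n)$ and $\abs{\syn{Var}} = O(n)$. The first step is to pin down the sizes of the components of the iteration space $\sa{System}''' = \sa{DSG} \times \sa{ECG} \times \sa{Store}$ under the monovariant allocator $\aalloc(\vv,\aconf) = \vv$. With addresses equal to variables, every environment maps each in-scope variable to itself, so environments are effectively eliminated and a partial state collapses to an expression; hence $m = \abs{\sa{PState}} = O(\abs{\syn{Exp}}) = O(n)$. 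By the frame-counting observation of the earlier footnote, $\abs{\sa{Frame}} = \abs{\syn{Var}}$, so the stack alphabet has size $\abs{\StackAlpha} = O(n)$, and the single-threaded store has at most $\abs{\syn{Var}} \times \abs{\syn{Lam}} = O(n^2)$ address--closure entries. The whole point of combining widening with monovariance is precisely to drive $m$ down from exponential to $O(n)$.

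Next I would bound the number of iterations of $\atf$. As in Section~\ref{sec:ecg-worklist}, each of the three components of the iteration space grows monotonically: the Dyck-state edge set $\hat E$, the $\epsilon$-closure edge set $\hat H$, and the global store $\astore$ only ever accrete, which is the componentwise analogue of the monotonicity Lemma and is what guarantees a least fixed point exists. The combined height of this product lattice is the sum of the maximum number of DSCFG edges, $O(\abs{\StackAlpha} m^2) = O(n^3)$, the maximum number of $\epsilon$-edges, $O(m^2) = O(n^2)$, and the number of store entries, $O(n^2)$. Since every non-final application of $\atf$ must add at least one edge or one store entry, the number of iterations is bounded by $O(n^3)$, dominated by the DSCFG edge count.

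Then I would bound the cost of a single iteration. Following the same ``consider every edge'' estimate that produced the $O(\abs{\StackAlpha}^2 m^4)$ bound in Section~\ref{sec:ecg-worklist}, the work spent computing $\hat T_+$, $\hat T_\epsilon$, $\hat T_-$, $\hat H_\epsilon$ and $\hat H_{+-}$ is proportional to the number of edges examined, namely $O(\abs{\StackAlpha} m^2) = O(n^3)$, while the store joins producing $\astore''$ contribute only $O(n^2)$ and are absorbed. Multiplying the $O(n^3)$ iteration bound by the $O(n^3)$ per-iteration cost yields $O(n^6)$, which is exactly the instantiation $O(\abs{\StackAlpha}^2 m^4) = O(n^2 \cdot n^4)$ with $m, \abs{\StackAlpha} = O(n)$.

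The hard part will be justifying the per-iteration cost while confirming that single-threading the store does not silently reintroduce exponential behaviour. Two points need care. First, growing the store can re-enable transitions that were previously blocked, so I must argue that these re-activations are captured by the monotone accretion of edges rather than forcing a fresh rescan that would raise the per-iteration cost; the key observation is that the store gains at most $O(n^2)$ entries over the entire run, contributing only a lower-order term. Second, the push-matches-pop computations $\hat T_-$ and $\hat H_{+-}$ are where CFL-reachability hides, and the $\epsilon$-closure graph $\hat H$ is precisely the device that lets each matched push/pop pair be found through a single $\epsilon$-edge lookup instead of a full reachability query. I would therefore verify that, amortized across the work-list, each push edge, pop edge and $\epsilon$-edge is processed a bounded number of times, so the total matching work stays within the $O(\abs{\StackAlpha}^2 m^4)$ envelope rather than incurring the naive triple product of pushes, $\epsilon$-edges and pops. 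Establishing this amortization cleanly is the crux of the argument.
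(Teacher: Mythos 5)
Your proposal is correct and takes essentially the same route as the paper's own (much terser) proof: monovariance makes the lattice height---DSCFG edges plus $\epsilon$-closure edges plus store entries---cubic in $n$, bounding the number of iterations, while each application of the iteration function costs at most the (cubic) number of edges, giving $O(n^3) \times O(n^3) = O(n^6)$. Your extra care about the push/pop matching joins in $\hat T_-$ and $\hat H_{+-}$ and about store growth re-enabling transitions merely makes explicit details the paper's two-sentence proof leaves implicit.
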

\begin{proof}
%\paragraph{Analysis of monovariant complexity}
%
As before, the maximum number of iterations is cubic in the size of
the program for a monovariant analysis.
Fortunately, the cost of each iteration is also now bounded by the number
of edges in the graph, which is also cubic.
\end{proof}

\section{Applications}
\label{sec:applications}

Pushdown control-flow analysis offers more precise control-flow
analysis results than the classical finite-state CFAs.
Consequently, pushdown control-flow analysis improves flow-driven
optimizations (\eg, constant propagation, global register allocation,
inlining~\cite{mattmight:Shivers:1991:CFA}) by eliminating more of
the false positives that block their application.

The more compelling applications of pushdown control-flow analysis
are those which are difficult to drive with classical control-flow
analysis.
Perhaps not surprisingly, the best examples of such analyses are
escape analysis and interprocedural dependence analysis.
Both of these analyses are limited by a static analyzer's ability to
reason about the stack, the core competency of pushdown control-flow
analysis.
(We leave an in-depth formulation and study of these
analyses to future work.)

\subsection{Escape analysis}
In escape analysis, the objective is to determine whether a
heap-allocated object is safely convertible into a stack-allocated
object.
In other words, the compiler is trying to figure out whether the frame
in which an object is allocated outlasts the object itself.
In higher-order languages, closures are candidates for escape
analysis.

Determining whether all closures over a particular \lamterm{} $\lam$
may be heap-allocated is straightforward: find the control states in
the Dyck state graph in which closures over $\lam$ are being created,
then find all control states reachable from these states
over only $\epsilon$-edge and push-edge transitions.
Call this set of control states the ``safe'' set.
Now find all control states which are invoking a closure over $\lam$.
If any of these control states lies outside of the safe set, then
stack-allocation may not be safe; if, however, all invocations lie
within the safe set, then stack-allocation of the closure is safe.

\subsection{Interprocedural dependence analysis}
In interprocedural dependence analysis, the goal is to determine, for
each \lamterm{}, the set of resources which it may read or write when
it is called.
Might and Prabhu showed that if one has knowledge of the program
stack, then one can uncover interprocedural
dependencies~\cite{mattmight:Might:2009:Dependence}.
We can adapt that technique to work with Dyck state graphs.
For each control state, find the set of reachable control states along
only $\epsilon$-edges and pop-edges.
The frames on the pop-edges determine the frames which could have been
on the stack when in the control state.
The frames that are live on the stack determine the procedures that are live on the stack.
Every procedure that is live on the stack has a read-dependence on any
resource being read in the control state, while every procedure that is live
on the stack also has a write-dependence on any resource being written in
the control state.
This logic is the direct complement of ``if $f$ calls $g$ and $g$
accesses $a$, then $f$ also accesses $a$.''

\section{Related work}
\label{sec:related}

Pushdown control-flow analysis draws on work in higher-order
control-flow analysis~\cite{mattmight:Shivers:1991:CFA}, abstract
machines~\cite{mattmight:Felleisen:1987:CESK} and abstract
interpretation~\cite{mattmight:Cousot:1977:AI}.

\paragraph{Context-free analysis of higher-order programs}
The closest related work for this is Vardoulakis and Shivers very
recent work on CFA2~\cite{dvanhorn:DBLP:conf/esop/VardoulakisS10}.
CFA2 is a table-driven summarization algorithm that exploits the
balanced nature of calls and returns to improve return-flow precision
in a control-flow analysis.
Though CFA2 alludes to exploiting context-free languages, context-free
languages are not explicit in its formulation in the same way that
pushdown systems are in pushdown control-flow analysis.
With respect to CFA2, pushdown control-flow analysis is polyvariant,
covers direct-style, and the monovariant instatiation is lower in
complexity (CFA2 is exponential-time).

On the other hand, CFA2 distinguishes stack-allocated and
store-allocated variable bindings, whereas our formulation of pushdown
control-flow analysis does not and allocates all bindings in the
store.
If CFA2 determines a binding can be allocated on the stack, that
binding will enjoy added precision during the analysis and is not
subject to merging like store-allocated bindings.

\paragraph{Calculation approach to abstract interpretation}

\citet{dvanhorn:Midtgaard2009Controlflow} systematically calculate
0CFA using the Cousot-Cousot-style calculational
approach~\citeyearpar{dvanhorn:Cousot98-5} to abstract interpretation
applied to an ANF $\lambda$-calculus.
Like the present work, Midtgaard and Jensen start with the CESK
machine of \citet{mattmight:Flanagan:1993:ANF} and employ a
reachable-states model. 
The analysis is then constructed by composing well-known
Galois connections to reveal a 0CFA incorporating reachability.
The abstract semantics approximate the control stack component of the
machine by its top element.
The authors remark monomorphism materializes in two mappings: ``one
mapping all bindings to the same variable,'' the other ``merging all
calling contexts of the same function.''
Essentially, the pushdown 0CFA of Section~\ref{sec:abstraction}
corresponds to Midtgaard and Jensen's analsysis when the latter
mapping is omitted and the stack component of the machine is not
abstracted.

\paragraph{CFL- and pushdown-reachability techniques}
This work also draws on CFL- and pushdown-reachability
analysis~\cite{mattmight:Bouajjani:1997:PDA-Reachability,mattmight:Kodumal:2004:CFL,mattmight:Reps:1998:CFL,mattmight:Reps:2005:Weighted-PDA}.
For instance, \ecg s, or equivalent variants thereof, appear in many
context-free-language and pushdown reachability algorithms.
For the less efficient versions of our analyses, we implicitly invoked
these methods as subroutines.
When we found these algorithms lacking (as with their enumeration of
control states), we developed Dyck state graph construction.

CFL-reachability techniques have also been used to compute classical
finite-state abstraction CFAs~\cite{mattmight:Melski:2000:CFL} and
type-based polymorphic control-flow
analysis~\cite{mattmight:Rehof:2001:TypeBased}.
These analyses should not be confused with pushdown control-flow
analysis, which is computing a fundamentally more precise kind of CFA.
Moreover, Rehof and Fahndrich's method is cubic in the size of the
\emph{typed} program, but the types may be exponential in the size of
the program.
In addition, our technique is not restricted to typed programs.

\paragraph{Model-checking higher-order recursion schemes}
There is terminology overlap with work by
\citet{mattmight:Kobayashi:2009:HORS} on model-checking higher-order
programs with higher-order recursion schemes, which are a
generalization of context-free grammars in which productions can take
higher-order arguments, so that an order-0 scheme is a context-free
grammar.
Kobyashi exploits a result by \citet{dvanhorn:Ong2006ModelChecking} which
shows that model-checking these recursion schemes is decidable (but
ELEMENTARY-complete) by transforming higher-order programs into
higher-order recursion schemes.
Given the generality of model-checking, Kobayashi's technique may be
considered an alternate paradigm for the analysis of
higher-order programs.
For the case of order-0, both Kobayashi's technique and our own
involve context-free languages, though ours is for control-flow
analysis and his is for model-checking with respect to a temporal
logic.
After these surface similarities, the techniques diverge.
Moreover, there does not seem to be a polynomial-time
variant of Kobyashi's method.

%\todo{Mention Kobayashi work not related.}

\paragraph{Other escape and dependence analyses}
We presented escape and dependence analyses to prove a point: that
pushdown control-flow analysis is more powerful than classical
control-flow analysis, in the sense that it can answer different kinds
of questions.
We have not yet compared our analyses with the myriad escape and
dependence analyses (\eg{}, \cite{mattmight:Blanchet:1998:Escape}) that
exist in the literature, though we do expect that, with their
increased precision, our analyses will be strongly competitive.

% \subsection{Preliminary comparison to classical CFA}

% Figure~\ref{fig:results} describes 
% preliminary
% results of our 
% implementation:
% \begin{center}
%   \verb+http://www.ucombinator.org/projects/pdcfa/+
% \end{center}
% \input{results}

% \section{Future work}

% Linear-bounded automata are more powerful than pushdown
% automata, yet still not Turing-complete.
% %
% It seems likely that a lightweight abstraction could target this
% class to recover even more precision, yet still remain computable.
% %
% It should also be straightforward to handle exceptions in pushdown
% control-flow analysis.

\section{Conclusion}
Pushdown control-flow analysis is an alternative paradigm for
the analysis of higher-order programs.
By modeling the run-time program stack with the stack of a
pushdown system, pushdown control-flow analysis precisely matches
returns to their calls.
We derived pushdown control-flow analysis as an abstract
interpretation of a CESK machine in which its stack component is left
unbounded.
As this abstract interpretation ranged over an infinite
state-space, we sought a decidable method for determining th
reachable states.
We found one by converting the abstracted CESK into a PDA that
recognized the language of legal control-state sequences.
By intersecting this language with a specific regular language and
checking non-emptiness, we were able to answer control-flow questions.
From the PDA formulation, we refined the technique to reduce
complexity from doubly exponential, to best-case exponential, to
worst-case exponential, to polynomial.
We ended with an efficient, polyvariant and precise framework.

\paragraph{Future work}
Pushdown control-flow analysis exploits the fact that clients of
static analyzers often need information about control states rather
than stacks.
Should clients require information about complete
configurations---control states plus stacks---our analysis is lacking.
Our framework represents configurations as \emph{paths} through Dyck
state graphs.
Its results can provide a regular description of the stack, but at a
cost proportional to the size of the graph.
For a client like abstract garbage collection, which would pay
this cost for every edge added to the graph, this cost is
unacceptable.
Our future work will examine how to incrementally summarize stacks
paired with each control state during the analysis.

\paragraph{Acknowledgements}
We thank Dimitrios Vardoulakis for comments and discussions,
and reviewers of ICFP and this workshop.

\bibliographystyle{plainnat}
\bibliography{bibliography}

\end{document}